  \providecommand\BibTeX{{%
    \normalfont B\kern-0.5em{\scshape i\kern-0.25em b}\kern-0.8em\TeX}}}
\def \P {\mathbb{P}}
\def \bfE {\mathbb{E}}
\def \cL{\mathcal{L}}
\def \cE{\mathcal{E}}
 \def \cD{\mathcal{D}}
 \author{Haoxuan Li}
\affiliation{
  \institution{Peking University}
  \country{Beijing, China}}
 \email{hxli@stu.pku.edu.cn}
 \author{Chunyuan Zheng}
\affiliation{
  \institution{{Peking University}}
  \country{Beijing, China}}
 \email{zhengchunyuan99@gmail.com}
 \author{Wenjie Wang}
\affiliation{
  \institution{National University of Singapore}
  \country{Singapore, Singapore}}
 \email{wenjiewang96@gmail.com}
\author{Hao Wang}
\affiliation{
  \institution{Zhejiang University}
  \country{Hangzhou, China}}
 \email{haohaow@zju.edu.cn}
\author{Fuli Feng}
\affiliation{
  \institution{University of Science and\\Technology of China}
  \country{}
  }
 \email{fulifeng93@gmail.com}
 \author{Xiao-Hua Zhou}
\affiliation{
  \institution{Peking University}
  \country{Beijing, China}}
 \email{azhou@math.pku.edu.cn}
\keywords{Bias, Debias, Noisy Feedback, Recommender Systems}
\begin{document}

\title{Debiased Recommendation with Noisy Feedback}



\begin{abstract}

Ratings of a user to most items in recommender systems are usually missing not at random (MNAR), largely because users are free to choose which items to rate. To achieve unbiased learning of the prediction model under MNAR data, three typical solutions have been proposed, including error-imputation-based (EIB), inverse-propensity-scoring (IPS), and doubly robust (DR) methods. However, these methods ignore an alternative form of bias caused by the inconsistency between the observed ratings and the users' true preferences, also known as noisy feedback or outcome measurement errors (OME), \emph{e.g.}, due to public opinion or low-quality data collection process. In this work, we study intersectional threats to the unbiased learning of the prediction model from data MNAR and OME in the collected data. First, we design OME-EIB, OME-IPS, and OME-DR estimators, which largely extend the existing estimators to combat OME in real-world recommendation scenarios. Next, we theoretically prove the unbiasedness and generalization bound of the proposed estimators. We further propose an alternate denoising training approach to achieve unbiased learning of the prediction model under MNAR data with OME. Extensive experiments are conducted on three real-world datasets and one semi-synthetic dataset to show the effectiveness of our proposed approaches. The code is available at \href{https://github.com/haoxuanli-pku/KDD24-OME-DR}{https://github.com/haoxuanli-pku/KDD24-OME-DR}.
\end{abstract}

\maketitle

\section{Introduction}
Recommender systems (RS) are designed to generate meaningful recommendations to a collection of users for items or products that might interest them, which have made a number of significant advancements in recent years~\cite{zhao2019deep,zhang2020explainable,rui2022knowledge,Zhang-etal-2023}. Nevertheless, direct use of these advanced models in real-world scenarios while ignoring the presence of numerous biases in the collected data can lead to sub-optimal performance of the rating prediction model~\cite{Chen-etal2022,yang2021top,yang2023debiased}. Among these, the problem of missing data is particularly prevalent, as users are free to choose items to rate and the ratings with a lower value are more likely to be missing~\cite{de2014reducing}, leading to the collected data in RS is always missing not at random (MNAR)~\cite{marlin2009collaborative}. MNAR ratings pose a serious challenge to the unbiased evaluation and learning of the prediction model because the observed data might not faithfully
represent the entirety of user-item pairs~\cite{Schnabel-Swaminathan2016,Wang-Zhang-Sun-Qi2019}. 

To tackle this problem, previous studies evaluate the performance of a prediction model by computing the \emph{prediction inaccuracy}: the average of the prediction errors (\emph{e.g.}, the squared difference between a
predicted rating and the \emph{potentially observed} rating) for all ratings~\cite{saito2019unbiased,saito2020doubly}. To unbiasedly estimate the prediction inaccuracy when the ratings are partially observable, three typical approaches have been proposed, including: (1) The error-imputation-based (EIB) approaches~\cite{Steck2010,Lobato-etal2014}, which compute an imputed error for each missing rating. (2) The inverse-propensity-scoring (IPS) approaches~\cite{Schnabel-Swaminathan2016,saito2020ips}, which inversely weight the prediction error for each observed rating with the probability of observing that rating. (3) The doubly robust (DR) approaches~\cite{SDR,song2023cdr, li2024kernel,li2024uidr}, which use both the error imputation model and the propensity model to estimate the prediction inaccuracy, and the estimation is unbiased when either the imputed errors or the learned propensities are accurate.

Despite the widespread use of these methods, they ignore an alternative form of bias caused by the inconsistency between the observed ratings and the users' true preferences, also known as noisy feedback or outcome measurement errors (OME). Similar to selection bias, OME also arises from systematic
bias during the data collection. For example, the collected user feedback may differ from the true user preferences due to the influence of public opinions~\cite{zheng2021-distangle}. Meanwhile, low-quality data collection such as recommender attacks~\cite{gunes2014shilling} or carelessly filling out the after-sales assessment can also result in noisy user feedback~\cite{wang2021denoising}. Therefore, to make current debiased recommendation techniques more applicable to real-world scenarios that may contain OME, it is worthwhile to develop new model evaluation criteria and corresponding estimators to achieve unbiased learning under OME.

In this work, we study intersectional threats to the unbiased learning of the prediction model introduced by data MNAR and OME from historical interactions, where the observed ratings are not only MNAR but may differ from the true ratings due to the presence of OME. In such a context, a natural evaluation criterion for the prediction models is proposed, called \emph{true prediction inaccuracy}: the average of the \emph{true} prediction errors (\emph{e.g.}, the squared difference between a
predicted rating and the \emph{true} rating) for all ratings. Given the collected data with OME, the true prediction errors are difficult to obtain even for the observed ratings, making the previous debiasing estimators severely biased in estimating the true prediction inaccuracy and training the rating prediction model.

To combat the influence of OME on the performance of the prediction model, many data-driven error parameter estimation methods are developed in recent machine learning literature~\cite{northcutt2021confident,scott2013classification,scott2015rate,xia2019anchor}. Meanwhile, given knowledge
of measurement error parameters, recent studies propose unbiased risk minimization approaches for learning under noisy labels~\cite{natarajan2013learning,chou2020unbiased,patrini2017making,van2015machine}. Despite the prevalence of OME in real-world recommendation scenarios, there is still only limited work focusing on denoising in RS~\cite{qin2021world,chen2022denoising,liu2021concept,zhang2023denoising}. 
For example, \cite{zhang2023robust} proposes to overcome noisy confounders by leveraging the sensitivity analysis in the statistics literature.
By noticing that noisy feedback typically has large loss values in the early stages, \cite{wang2021denoising} proposes adaptive denoising training and \cite{gao2022self} proposes self-guided denoising learning for implicit feedback. However, most of these methods are heuristic and lack theoretical guarantees of statistical unbiasedness for estimating the true prediction inaccuracy.

To address the above problem, we extend the widely adopted EIB, IPS, and DR estimators to achieve unbiased estimations with the true prediction inaccuracy under OME, named OME-EIB, OME-IPS, and OME-DR estimators, respectively. Our methods are built upon the existing weak separability assumption~\cite{menon2015learning}, which states that there exist "perfectly positive" and "perfectly negative" samples among the entire user-item pairs. Specifically, we discuss the rationality of the weak separability assumption in real-world recommendation scenarios and build up the linkage between the observed and true outcome probabilities at specific instances, from which we are able to obtain unbiased estimations of the measurement error parameters. We also derive explicit forms of the biases and the generalization bounds of the proposed OME-EIB, OME-IPS, and OME-DR estimators under inaccurately estimated measurement error parameters, from which we prove the double robustness of the OME-DR estimator. We further propose an alternating denoise training approach to achieve unbiased learning of the prediction model, which corrects for data MNAR and OME in parallel. In particular, the imputation model learns to accurately estimate the prediction errors made by the prediction model, while the prediction model learns from the imputation model to reduce the prediction errors in itself. In this way, the prediction and imputation models mutually regularize each other to reduce both prediction and imputation inaccuracies. The effectiveness of our proposed approaches is validated on three real-world datasets and one semi-synthetic dataset with varying MNAR levels and OME rates. To the best of our knowledge, our holistic evaluation is the first to examine how OME with its measurement error parameters' estimation in the context of selection bias interact to affect the debiased recommendations.

The contributions of this paper are summarized as follows.
    \vspace{-0.6mm}
\begin{itemize}[leftmargin=*]
    \item We formulate OME caused by the inconsistency between the observed ratings and the users' true preferences, and establish an evaluation criterion for the unbiased learning of the prediction model under OME, named true prediction inaccuracy.
    \item We develop OME-EIB, OME-IPS, and OME-DR estimators for unbiasedly estimating the true prediction inaccuracy of the prediction model  under MNAR data with OME, and theoretically
analyze the biases and generalization bounds of the estimators.
    \item We further propose an alternating denoise training approach to estimate the measurement error parameters and achieve unbiased learning of the prediction model under MNAR data with OME, which corrects for data MNAR and OME in parallel.
    \item We conduct extensive experiments on three real-world datasets and one semi-synthetic dataset, and the results demonstrate the superiority of our methods with varying MNAR and OME rates.
\end{itemize}
\section{Preliminaries}
\subsection{Task Formulation without OME}
Let $\mathcal{U}=\{u_1, \ldots, u_N\}$ be a set of users, $\mathcal{I}=\{i_1, \ldots, i_M\}$ a set of items, and $\mathcal{D}=\mathcal{U} \times \mathcal{I}$ the collection of all user-item pairs. The potentially observed rating matrix $\mathbf{R} \in \mathbb{R}^{N \times M}$ comprises potentially observed ratings $r_{u, i}$, which represents the observed rating if user $u$ had rated the item $i$. In RS, given the user-item features $x_{u, i}$, the prediction model $f_{\theta}(x_{u,i})$ parameterized by $\theta$ aims to accurately predict the ratings of all users for all items, and then recommend to the user the items with the highest predicted ratings, as described in \cite{ricci2010introduction}. The prediction matrix $\hat{\mathbf{R}} \in \mathbb{R}^{N \times M}$ comprises the predicted ratings $f_{\theta}(x_{u,i})$ obtained from this prediction model. If the rating matrix $\mathbf{R}$ had been fully observed, then the \emph{prediction inaccuracy} $\mathcal{P}$ of the prediction model can be measured using
$$
\mathcal{P}=\mathcal{P}(\hat{\mathbf{R}}, \mathbf{R}) = \frac{1}{|\cD|} \sum_{(u,i)\in \cD} e_{u,i},
$$
where $e_{u,i} = \ell(f_{\theta}(x_{u,i}), r_{u,i})$ is the prediction error, and $\ell(\cdot, \cdot)$ is a pre-defined loss, such as the mean square error (MSE), \emph{i.e.}, $e_{u, i}=(f_{\theta}(x_{u,i})-r_{u, i})^2$. Let $\mathbf{O} \in$ $\{0,1\}^{N \times M}$ be an indicator matrix, where each entry $o_{u, i}$ is an observation indicator: $o_{u, i}=1$ if the rating $r_{u, i}$ is observed, and $o_{u, i}=0$ if the rating $r_{u, i}$ is missing. Given $\mathbf{R}^o$ as the set of the observed entries in the rating matrix $\mathbf{R}$, the rating prediction model aims to train the prediction model that minimizes the prediction inaccuracy $\mathcal{P}$. Nonetheless, as users are free to choose items to rate, leading to the collection of observational data that is always missing not at random (MNAR)~\cite{marlin2009collaborative,de2014reducing, zhang2023mnar}, \emph{e.g.}, the ratings with a lower value are more likely to be missing. Let $\mathcal{O}=\left\{(u, i) \mid (u, i) \in \mathcal{D}, o_{u, i}=1\right\}$ be the set of user-item pairs for the observed ratings, the direct use of the naive estimator $\mathcal{E}_{\mathrm{N}}=\mathcal{E}_{\mathrm{N}}(\hat{\mathbf{R}}, \mathbf{R}^o)=\frac{1}{|\mathcal{O}|} \sum_{(u, i) \in \mathcal{O}} e_{u, i}$ that 
computes on the observed data would yield severely biased estimation~\cite{Schnabel-Swaminathan2016,Wang-Zhang-Sun-Qi2019}.

\subsection{Existing Estimators}
To unbiasedly estimate the prediction inaccuracy $\mathcal{P}$ given the partially observed ratings $\mathbf{R}^o$, the error-imputation-based (EIB) approaches~\cite{Steck2010,Lobato-etal2014} compute an imputed error $\hat e_{u, i}$ for each missing rating, and estimate the prediction inaccuracy with
\begin{align*}
    \mathcal{E}_\mathrm{EIB}(\hat{\mathbf{R}}, \mathbf{R}^o) = \frac{1}{|\cD|} \sum_{(u,i)\in \cD} (o_{u,i} e_{u,i} + (1-o_{u,i}) \hat e_{u,i}),
\end{align*}
which is an unbiased estimator of the prediction inaccuracy when the imputed errors are accurate, \emph{i.e.}, $\hat e_{u, i}=e_{u, i}$.

The inverse-propensity-scoring (IPS) approaches~\cite{Schnabel-Swaminathan2016,saito2020ips} first learn $\hat p_{u, i}$ as the estimate of the propensity $p_{u, i}=\P(o_{u, i}=1\mid x_{u, i})$, \emph{i.e.}, the probability of observing the rating, then inversely weight the prediction error for each observed rating with the learned propensity, and estimate the prediction inaccuracy with
\begin{align*}
    \mathcal{E}_\mathrm{IPS}(\hat{\mathbf{R}}, \mathbf{R}^o) = \frac{1}{|\cD|} \sum_{(u,i)\in \cD} \frac{o_{u,i} e_{u,i}}{\hat p_{u,i}},
    \end{align*}
which is an unbiased estimator of the prediction inaccuracy when the learned propensities are accurate, \emph{i.e.}, $\hat p_{u, i}=p_{u, i}$.

The doubly robust (DR) approaches~\cite{Wang-Zhang-Sun-Qi2019,li2023balancing,li2023propensity, li2023removing} use both the error imputation model and the propensity model to estimate the prediction inaccuracy with
\begin{align*}
\mathcal{E}_\mathrm{DR}(\hat{\mathbf{R}}, \mathbf{R}^o) = \frac{1}{|\cD|} \sum_{(u,i) \in \cD} \left( \hat e_{u,i}  +  \frac{ o_{u,i} (e_{u,i} -  \hat e_{u,i}) }{ \hat p_{u, i} } \right),
\end{align*}
which is an unbiased estimator of the prediction inaccuracy when either the imputed errors or the learned propensities are accurate, \emph{i.e.}, $\hat e_{u, i}=e_{u, i}$ or $\hat p_{u, i}=p_{u, i}$, this is also known as double robustness.

\section{Methodology}
\subsection{Task Formulation under OME}
Despite many methods have been proposed for achieving unbiased learning to tackle the data MNAR problem~\cite{Schnabel-Swaminathan2016,Wang-Zhang-Sun-Qi2019,saito2020doubly,wang2023counterclr,li2024interference}, they ignore an alternative form of bias caused by the inconsistency between the observed ratings and the users' true preferences, also known as noisy feedback or outcome measurement errors (OME). Both data MNAR and OME arise from systematic
bias during the data collection. In RS, two common scenarios that cause incorrect user feedback signals include the influence of public opinions~\cite{zheng2021-distangle}, and the low-quality data collection such as recommender attacks~\cite{gunes2014shilling} or carelessly filling out the after-sales assessments~\cite{wang2021denoising}. Formally, we denote $\mathbf{R}^*$ as the user's true preference matrix, with $r^*_{u, i}$ as its entries, which may deviate from the potentially observed ratings $r_{u, i}$. Let $\P({r}_{u, i}=0 \mid r^{*}_{u, i}=1)=\rho_{01}$ and $ \P({r}_{u, i}=1 \mid r^*_{u, i}=0)=\rho_{10}$ be the false negative rate and false positive rate, respectively, where $\rho_{01}+\rho_{10}<1$, then we have  
\[
\P(r_{u, i}=1\mid x_{u, i})=(1-\rho_{01})\cdot \P(r^*_{u, i}=1\mid x_{u, i})+\rho_{10}\cdot\P(r^*_{u, i}=0\mid x_{u, i}).
\]

To make current debiased recommendation techniques more applicable to real-world scenarios that may contain OME, we then establish a new model evaluation criterion for the unbiased learning under OME, named \emph{true prediction inaccuracy}, formally defined as
\[   \mathcal{P}^*=\mathcal{P}(\hat{\mathbf{R}}, \mathbf{R}^*)  = \frac{1}{|\cD|} \sum_{(u,i)\in \cD} e_{u,i}^*,     \]
where $e_{u,i}^* = \ell(f_{\theta}(x_{u,i}),r^*_{u,i})$ is called the \emph{true prediction error.}

\subsection{Proposed OME-EIB, OME-IPS, and OME-DR Estimators}
To achieve unbiased learning of $r^*_{u, i}$ using MNAR data with OME, a typical class of methods~\cite{natarajan2013learning,liu2015classification} propose the use of a surrogate loss $\tilde{\ell}(f_{\theta}(x_{u, i}), r_{u, i})$ based on the observed labels $r_{u, i}$, which satisfies
\[\mathbb{E}_{r\mid r^*}[\tilde{\ell}(f_{\theta}(x_{u, i}), r_{u, i})]=\ell(f_{\theta}(x_{u, i}), r^*_{u, i}).\]
Considering the cases $r^*_{u, i}=1$ and $r^*_{u, i}=0$ separately, we have
$$
\begin{aligned}
& \left(1-\rho_{01}\right)\cdot \tilde{\ell}(f_{\theta}(x_{u,i}),1)+\rho_{01}\cdot \tilde{\ell}(f_{\theta}(x_{u,i}),0)=\ell(f_{\theta}(x_{u,i}),1), \\
& \left(1-\rho_{10}\right)\cdot \tilde{\ell}(f_{\theta}(x_{u,i}),0)+\rho_{10}\cdot \tilde{\ell}(f_{\theta}(x_{u,i}),1)=\ell(f_{\theta}(x_{u,i}),0).
\end{aligned}
$$
Solving these equations for $\tilde{\ell}(f_{\theta}(x_{u,i}),1)$ and $\tilde{\ell}(f_{\theta}(x_{u,i}),0)$ gives
$$
\begin{aligned}
\tilde{\ell}(f_{\theta}(x_{u,i}),1) & =\frac{\left(1-\rho_{10}\right) \ell(f_{\theta}(x_{u,i}),1)-\rho_{01} \ell(f_{\theta}(x_{u,i}),0)}{1-\rho_{01}-\rho_{10}}, \\
\tilde{\ell}(f_{\theta}(x_{u,i}),0) & =\frac{\left(1-\rho_{01}\right) \ell(f_{\theta}(x_{u,i}),0)-\rho_{10} \ell(f_{\theta}(x_{u,i}),1)}{1-\rho_{01}-\rho_{10}}.
\end{aligned}
$$
However, the existing surrogate loss-based methods require fully observed noisy labels $r_{u,i}$, which prevents the direct use of such methods for RS in the presence of missing data. To fill this gap, motivated by the broad usage of EIB, IPS, and DR estimators in the missing data literature~\cite{bang2005doubly,little2019statistical}, we extend the above surrogate loss-based methods to address the data MNAR and OME in parallel.

Specifically, given knowledge of error parameters $\rho_{01}$ and $\rho_{10}$, let $\tilde e_{u, i}= r_{u,i} \cdot \tilde{\ell}(f_{\theta}(x_{u,i}),1) + (1-r_{u,i}) \cdot \tilde{\ell}(f_{\theta}(x_{u,i}),0)$ be the surrogate loss with the rating $r_{u, i}$, the OME-EIB estimator estimates the true prediction inaccuracy $\mathcal{P}^*=\mathcal{P}(\hat{\mathbf{R}}, \mathbf{R}^*)$ with
\begin{align*}
&\cE_\mathrm{OME-EIB}(\hat{\mathbf{R}}, \mathbf{R}^o; \rho_{01}, \rho_{10}) =\frac{1}{|\cD|} \sum_{(u,i) \in \cD}(1-o_{u,i}) \bar e_{u, i}{}+{}\\
&\frac{1}{|\cD|} \sum_{(u,i) \in \cD}\frac{o_{u, i}r_{u, i}\{\left(1-\rho_{10}\right) \ell(f_{\theta}(x_{u,i}),1)-\rho_{01} \ell(f_{\theta}(x_{u,i}),0)\}}{1-\rho_{01}-\rho_{10}}{}+{}\\
&\frac{1}{|\cD|} \sum_{(u,i) \in \cD}\frac{o_{u, i}(1-r_{u, i})\{\left(1-\rho_{01}\right) \ell(f_{\theta}(x_{u,i}),0)-\rho_{10} \ell(f_{\theta}(x_{u,i}),1)\}}{1-\rho_{01}-\rho_{10}},  
\end{align*}
where $\bar e_{u, i}$ is the imputed error for estimating $\tilde e_{u, i}$, and OME-EIB is an unbiased estimator of the true prediction inaccuracy when the imputed errors are accurate, \emph{i.e.}, $\bar e_{u, i}=\tilde e_{u, i}$. Since the proof is not trivial, we postpone to show unbiasedness of OME-EIB estimator and the following OME-IPS and OME-DR estimators in Sec.~\ref{sec:3.4}.

Similarly, the OME-IPS estimator estimates $\mathcal{P}^*=\mathcal{P}(\hat{\mathbf{R}}, \mathbf{R}^*)$ with
{\begin{align*}
&\cE_\mathrm{OME-IPS}(\hat{\mathbf{R}}, \mathbf{R}^o; \rho_{01}, \rho_{10}){}={}\\
&\frac{1}{|\cD|} \sum_{(u,i) \in \cD}\frac{o_{u, i}r_{u, i}}{\hat p_{u, i}}\cdot \frac{\left(1-\rho_{10}\right) \ell(f_{\theta}(x_{u,i}),1)-\rho_{01} \ell(f_{\theta}(x_{u,i}),0)}{1-\rho_{01}-\rho_{10}}{}+{}\\
&\frac{1}{|\cD|} \sum_{(u,i) \in \cD}\frac{o_{u, i}(1-r_{u, i})}{\hat p_{u, i}}\cdot\frac{\left(1-\rho_{01}\right) \ell(f_{\theta}(x_{u,i}),0)-\rho_{10} \ell(f_{\theta}(x_{u,i}),1)}{1-\rho_{01}-\rho_{10}},\end{align*}}where $\hat p_{u, i}$ is the learned propensity for estimating $p_{u, i}$, and OME-IPS is an unbiased estimator of the true prediction inaccuracy when the learned propensities are accurate, \emph{i.e.}, $\hat p_{u, i}=p_{u, i}$.

The OME-DR estimator estimates $\mathcal{P}^*=\mathcal{P}(\hat{\mathbf{R}}, \mathbf{R}^*)$ with

\begin{align*}
&\cE_\mathrm{OME-DR}(\hat{\mathbf{R}}, \mathbf{R}^o; \rho_{01}, \rho_{10})=\frac{1}{|\cD|} \sum_{(u,i) \in \cD} \left(1-\frac{o_{u, i}}{\hat p_{u, i}}\right)\bar e_{u, i}{}+{}\\
&\frac{1}{|\cD|} \sum_{(u,i) \in \cD}\frac{o_{u, i}r_{u, i}}{\hat p_{u, i}}\cdot \frac{\left(1-\rho_{10}\right) \ell(f_{\theta}(x_{u,i}),1)-\rho_{01} \ell(f_{\theta}(x_{u,i}),0)}{1-\rho_{01}-\rho_{10}}{}+{}\\
&\frac{1}{|\cD|} \sum_{(u,i) \in \cD}\frac{o_{u, i}(1-r_{u, i})}{\hat p_{u, i}}\cdot\frac{\left(1-\rho_{01}\right) \ell(f_{\theta}(x_{u,i}),0)-\rho_{10} \ell(f_{\theta}(x_{u,i}),1)}{1-\rho_{01}-\rho_{10}},
\end{align*}
which is an unbiased estimate of the true prediction inaccuracy when either the imputed errors or the learned propensities are accurate, \emph{i.e.}, $\bar e_{u, i}=\tilde e_{u, i}$ or $\hat p_{u, i}=p_{u, i}$.
    
\subsection{Identification and Estimation of $\rho_{01}$ and $\rho_{10}$}\label{sec:3.3}
The proposed OME-EIB, OME-IPS, and OME-DR estimators require knowledge of the known error parameters $\rho_{01}$ and $\rho_{10}$, which are usually not directly available from the collected data. By building upon the existing weak separability assumption~\cite{menon2015learning}, we present a data-driven identification and estimation method of $\rho_{01}$ and $\rho_{10}$.

We impose the following weak separability assumption that
\[\inf _{(u, i) \in \cD} \P(r^*_{u, i}=1\mid x_{u, i})=0 \quad \text{and}\quad  \sup _{(u, i) \in \cD} \P(r^*_{u, i}=1\mid x_{u, i})=1,\]
which also known as mutual irreducibility~\cite{scott2013classification,scott2015rate} in the observational label noise literature. This does not require the true ratings to be separable, \emph{i.e.}, $\P(r^*_{u, i}=1\mid x_{u, i})\in \{0,1\}$ for all user-item pairs, but instead stipulates that there exist "perfectly positive" and "perfectly negative" samples among the entire user-item pairs. In real-world recommendation scenarios, the weak separability assumption is easily satisfied, providing there exists at least one "perfectly positive" feedback and one "perfectly negative" feedback among the thousands of collected ratings. For example, in the movie rating scenario, "perfectly positive" feedback refers to at least one of the users who made a positive review of the movie is fully reliable, and we do not need to know who that user exactly is.

By noting the linkage between the observed and the true ratings
\begin{align*}
\P(r_{u, i}=1\mid x_{u, i})={}&{}(1-\rho_{01})\P(r^*_{u, i}=1\mid x_{u, i})+\rho_{10}\P(r^*_{u, i}=0\mid x_{u, i})\\
={}&{}(1-\rho_{01}-\rho_{10})\P(r^*_{u, i}=1\mid x_{u, i})+\rho_{10},
\end{align*}
which demonstrates the monotonicity between $\P(r_{u, i}=1\mid x_{u, i})$ and $\P(r^*_{u, i}=1\mid x_{u, i})$ under $\rho_{01}+\rho_{10}<1$, we have
\begin{align*}
(u^{<}, i^{<})&=\arg\min_{(u, i) \in \cD} \P(r_{u, i}=1\mid x_{u, i})=\arg\min_{(u, i) \in \cD} \P(r^*_{u, i}=1\mid x_{u, i}),\\
(u^{>}, i^{>})& =\arg\max_{(u, i) \in \cD} \P(r_{u, i}=1\mid x_{u, i})=\arg\max_{(u, i) \in \cD} \P(r^*_{u, i}=1\mid x_{u, i}).
\end{align*}
Then, we can identify the error parameters $\rho_{01}$ and $\rho_{10}$ via
\[
\rho_{01}=1-{}\P(r_{u^{>}, i^{>}}=1\mid x_{u^{>}, i^{>}}) \quad \text {and}\quad \rho_{10}=\P(r_{u^{<}, i^{<}}=1\mid x_{u^{<}, i^{<}}),
\]
where $\P(r_{u^{>}, i^{>}}=1\mid x_{u^{>}, i^{>}})$ and $\P(r_{u^{<}, i^{<}}=1\mid x_{u^{<}, i^{<}})$ can be unbiasedly estimated from the existing EIB, IPS, or DR estimators for estimating $\P(r_{u, i}=1\mid x_{u, i})$ without considering the OME.

\subsection{Theoretical Analyses}\label{sec:3.4}
Since the OME-DR estimator degenerates to the OME-IPS estimator when $\bar e_{u, i}=0$, and degenerates to the OME-EIB estimator when $\hat p_{u, i}=1$, without loss of generality, we only analyze the explicit bias form of the OME-DR estimator. Following existing literature~\cite{Schnabel-Swaminathan2016,Wang-Zhang-Sun-Qi2019}, we assume that the indicator matrix $\mathbf{O}$ contains independent random variables and each $o_{u, i}$ follows a Bernoulli distribution with probability $p_{u, i}$. In addition, due to the presence of OME, we also consider the randomness of the potentially observed ratings $r_{u, i}$ given the true ratings $r^*_{u, i}$, \emph{e.g.}, $r_{u, i}=1$ with probability $1-\rho_{01}$ given $r^*_{u, i}=1$, and $r_{u, i}=0$ with probability $\rho_{01}$ given $r^*_{u, i}=1$.

\begin{theorem}[Bias of OME-DR Estimator]\label{thm3.1} Given $\hat \rho_{01}$ and  $\hat \rho_{10}$ with $\hat \rho_{01}+\hat \rho_{10}<1$, imputed errors $\bar{\mathbf{E}}$ and learned propensities $\hat{\mathbf{P}}$ with $\hat{p}_{u, i}>0$ for all user-item pairs, the bias of the OME-DR estimator is
\begin{align*}
&\operatorname{Bias}[\mathcal{E}_{\mathrm{OME-DR}}(\hat{\mathbf{R}}, \mathbf{R}^o; \hat \rho_{01}, \hat \rho_{10})]=
\frac{1}{|\cD|}\left|\sum_{(u,i) \in \cD}\left(1-\frac{o_{u, i}}{\hat p_{u, i}}\right)\bar e_{u, i}\right.+\\
&\sum_{(u,i): r^*_{u,i}=1}\left(\frac{p_{u, i}\omega_{11}-\hat p_{u, i}}{\hat p_{u, i}}\ell(f_{\theta}(x_{u,i}),1)+\frac{p_{u, i}\omega_{01}}{\hat p_{u, i}}\ell(f_{\theta}(x_{u,i}),0)\right)+\\
&\left.\sum_{(u,i): r^*_{u,i}=0}\left(\frac{p_{u, i}\omega_{10}}{\hat p_{u, i}}\ell(f_{\theta}(x_{u,i}),1)+\frac{p_{u, i}\omega_{00}-\hat p_{u, i}}{\hat p_{u, i}}\ell(f_{\theta}(x_{u,i}),0)\right)\right|,
\end{align*}
where $\omega_{11}$, $\omega_{01}$, $\omega_{10}$, and $\omega_{00}$ are given by
\begin{align*}
\omega_{11}&=\frac{1-\rho_{01}-\hat \rho_{10}}{1-\hat \rho_{01}-\hat \rho_{10}}, \quad \omega_{01}= \frac{\rho_{01}-\hat \rho_{01}}{1-\hat \rho_{01}-\hat \rho_{10}},\\
\omega_{10}&=\frac{\rho_{10}-\hat \rho_{10}}{1-\hat \rho_{01}-\hat \rho_{10}}, \quad \omega_{00}=\frac{1-\hat \rho_{01}- \rho_{10}}{1-\hat \rho_{01}-\hat \rho_{10}}.
\end{align*}
\end{theorem}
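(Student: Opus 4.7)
The plan is to compute $\mathbb{E}[\mathcal{E}_{\mathrm{OME\text{-}DR}}(\hat{\mathbf{R}},\mathbf{R}^o;\hat\rho_{01},\hat\rho_{10})]$ term by term, subtract $\mathcal{P}^*=\frac{1}{|\cD|}\sum_{(u,i)}\ell(f_\theta(x_{u,i}),r^*_{u,i})$, and take the absolute value. The expectation is taken over two independent sources of randomness: $o_{u,i}\sim\mathrm{Bernoulli}(p_{u,i})$ and, conditional on $r^*_{u,i}$, the observed $r_{u,i}$ which is Bernoulli with mean $1-\rho_{01}$ if $r^*_{u,i}=1$ and $\rho_{10}$ if $r^*_{u,i}=0$. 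Because $\bar e_{u,i}$, $\hat p_{u,i}$, $\hat\rho_{01}$, $\hat\rho_{10}$ and $\ell(f_\theta(x_{u,i}),\cdot)$ are deterministic given $x_{u,i}$, all expectations reduce to computing $\mathbb{E}[o_{u,i}]=p_{u,i}$, $\mathbb{E}[o_{u,i}r_{u,i}]=p_{u,i}\mathbb{E}[r_{u,i}]$ and $\mathbb{E}[o_{u,i}(1-r_{u,i})]=p_{u,i}(1-\mathbb{E}[r_{u,i}])$.

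First, I would handle the imputation term: $\mathbb{E}\bigl[(1-o_{u,i}/\hat p_{u,i})\bar e_{u,i}\bigr]=(1-p_{u,i}/\hat p_{u,i})\bar e_{u,i}$, which reproduces the first summand in the claimed bias expression. Next, I would partition the remaining two sums according to whether $r^*_{u,i}=1$ or $r^*_{u,i}=0$ and apply the conditional means of $r_{u,i}$ above. For a pair with $r^*_{u,i}=1$, the two IPS-style terms contribute a combined coefficient on $\ell(f_\theta(x_{u,i}),1)$ equal to $\frac{p_{u,i}}{\hat p_{u,i}(1-\hat\rho_{01}-\hat\rho_{10})}\bigl[(1-\rho_{01})(1-\hat\rho_{10})-\rho_{01}\hat\rho_{10}\bigr]$, and on $\ell(f_\theta(x_{u,i}),0)$ equal to $\frac{p_{u,i}}{\hat p_{u,i}(1-\hat\rho_{01}-\hat\rho_{10})}\bigl[-(1-\rho_{01})\hat\rho_{01}+\rho_{01}(1-\hat\rho_{01})\bigr]$. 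The cross terms $\rho_{01}\hat\rho_{10}$ and $\rho_{01}\hat\rho_{01}$ cancel, leaving $1-\hat\rho_{10}-\rho_{01}$ and $\rho_{01}-\hat\rho_{01}$ in the brackets; dividing by the common denominator recovers $\omega_{11}$ and $\omega_{01}$. The computation for $r^*_{u,i}=0$ is symmetric and yields $\omega_{10}$ and $\omega_{00}$.

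Finally, I would write $\mathcal{P}^*$ as $\frac{1}{|\cD|}\sum_{(u,i):r^*_{u,i}=1}\ell(f_\theta(x_{u,i}),1)+\frac{1}{|\cD|}\sum_{(u,i):r^*_{u,i}=0}\ell(f_\theta(x_{u,i}),0)$, subtract it from $\mathbb{E}[\mathcal{E}_{\mathrm{OME\text{-}DR}}]$, and absorb the resulting $-\ell(f_\theta(x_{u,i}),1)$ term into the coefficient $\frac{p_{u,i}\omega_{11}}{\hat p_{u,i}}$ to obtain $\frac{p_{u,i}\omega_{11}-\hat p_{u,i}}{\hat p_{u,i}}$ on the $r^*_{u,i}=1$ side, and analogously $\frac{p_{u,i}\omega_{00}-\hat p_{u,i}}{\hat p_{u,i}}$ on the $r^*_{u,i}=0$ side. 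Taking absolute values produces the stated expression.

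The only nontrivial step is the algebraic simplification in the second paragraph, where one must verify that the bracketed expressions collapse to the clean forms $1-\hat\rho_{10}-\rho_{01}$, $\rho_{01}-\hat\rho_{01}$, $\rho_{10}-\hat\rho_{10}$, and $1-\hat\rho_{01}-\rho_{10}$; the rest is bookkeeping. Note that this calculation also immediately implies the corollary asserted in the excerpt: if $\hat p_{u,i}=p_{u,i}$ then $\omega_{11},\omega_{00}\to 1$ and $\omega_{01},\omega_{10}\to 0$ only when $(\hat\rho_{01},\hat\rho_{10})=(\rho_{01},\rho_{10})$, while if $\bar e_{u,i}=\tilde e_{u,i}$ one recovers unbiasedness via the surrogate loss identity from Section 3.2, yielding the double robustness under correct error parameters.
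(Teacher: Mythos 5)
Your proposal is correct and follows essentially the same route as the paper's proof: take the expectation over $\mathbf{O}$ and over $\mathbf{R}$ given $\mathbf{R}^*$, split the sum by $r^*_{u,i}\in\{0,1\}$, and simplify the resulting coefficients on $\ell(f_\theta(x_{u,i}),1)$ and $\ell(f_\theta(x_{u,i}),0)$ to obtain $\omega_{11},\omega_{01},\omega_{10},\omega_{00}$ before subtracting $\mathcal{P}^*$. The only cosmetic difference is that you expand the surrogate-loss coefficients directly while the paper first writes $\mathbb{E}_{\mathbf{R}\mid\mathbf{O}}[\tilde e_{u,i}]$ in terms of $\tilde\ell(\cdot;\hat\rho_{01},\hat\rho_{10})$, and your expectation correctly turns the first summand into $(1-p_{u,i}/\hat p_{u,i})\bar e_{u,i}$, matching the paper's derivation (the theorem statement's retention of $o_{u,i}$ there is the paper's own notational slip, not a gap in your argument).
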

The bias of the OME-DR estimator includes the three terms: (1) The first term shares a similar form to the bias of the previous IPS estimator and leads to smaller bias when $\hat p_{u, i}\approx p_{u, i}$. (2) The second term is novel for OME, specifically focusing on the estimated false negative rate $\hat \rho_{01}$ that corresponds to the positive samples $r^*_{u, i}=1$. Moreover, we find that $\omega_{11}=1$ and $\omega_{01}=0$ when the the estimated false negative rate $\hat \rho_{01}$ is accurate, \emph{i.e.}, $\hat \rho_{01}=\rho_{01}$, which results in smaller bias when $\bar e_{u, i}\approx\tilde e_{u, i}$. (3) The third term is similar to the second term, but instead focuses on the estimated false positive rate $\hat \rho_{10}$ that corresponds to the negative samples $r^*_{u, i}=0$, and also results in smaller bias when $\bar e_{u, i}\approx\tilde e_{u, i}$. Given the importance of bias derivation for constructing estimators under OME, we provide a proof sketch as below (see Appendix \ref{app-a} for more details).

\begin{proof}[Proof Sketch]
By definition, bias of the OME-DR estimator is
\begin{align*}
\operatorname{Bias}[\mathcal{E}_{\mathrm{OME-DR}}(\hat{\mathbf{R}}, \mathbf{R}^o; \hat \rho_{01}, \hat \rho_{10})]=\left|\mathbb{E}_{\mathbf{R},\mathbf{O}}\left[\mathcal{E}_{\mathrm{OME-DR}}\right]-\mathcal{P}^*\right|.
\end{align*}
Then, we can derive the bias of the OME-DR estimator as follows
\begin{align*}
&\operatorname{Bias}[\mathcal{E}_{\mathrm{OME-DR}}(\hat{\mathbf{R}}, \mathbf{R}^o; \hat \rho_{01}, \hat \rho_{10})]=\left|\mathbb{E}_{\mathbf{R}\mid \mathbf{O}}\left[\mathbb{E}_{\mathbf{O}}\left[\mathcal{E}_{\mathrm{OME-DR}}\right]\right]-\mathcal{P}^*\right|\\
={}&{}\left|\frac{1}{|\cD|} \sum_{(u,i): r^*_{u,i}=1} \mathbb{E}_{\mathbf{R}\mid \mathbf{O}}\left[\left(1-\frac{p_{u, i}}{\hat p_{u, i}}\right)\bar e_{u, i}+\frac{p_{u, i}\tilde e_{u, i}}{\hat p_{u, i}}\right]-\ell(f_{\theta}(x_{u,i}),1)\right.\\
+{}&{}\left.\frac{1}{|\cD|}\sum_{(u,i): r^*_{u,i}=0}\mathbb{E}_{\mathbf{R}\mid \mathbf{O}}\left[\left(1-\frac{p_{u, i}}{\hat p_{u, i}}\right)\bar e_{u, i}+\frac{p_{u, i}\tilde e_{u, i}}{\hat p_{u, i}}\right]-\ell(f_{\theta}(x_{u,i}),0)\right|.
\end{align*}
For the user-item pairs with $r^*_{u,i}=1$, we have
\begin{align*}
&\mathbb{E}_{\mathbf{R}\mid \mathbf{O}}\left[\tilde e_{u,i}\right]\\
={}&{}\left(1-\rho_{01}\right) \cdot\tilde{\ell}(f_{\theta}(x_{u,i}),1; \hat \rho_{01}, \hat \rho_{10})+\rho_{01}\cdot \tilde{\ell}(f_{\theta}(x_{u,i}),0; \hat \rho_{01}, \hat \rho_{10})\\
={}&{}\left(1-\rho_{01}\right) \cdot \frac{\left(1-\hat \rho_{10}\right) \ell(f_{\theta}(x_{u,i}),1)-\hat \rho_{01} \ell(f_{\theta}(x_{u,i}),0)}{1-\hat \rho_{01}-\hat \rho_{10}}\\
&{}+\rho_{01}\cdot\frac{\left(1-\hat \rho_{01}\right) \ell(f_{\theta}(x_{u,i}),0)-\hat \rho_{10} \ell(f_{\theta}(x_{u,i}),1)}{1-\hat \rho_{01}-\hat \rho_{10}}\\
={}&{}\frac{1-\rho_{01}-\hat \rho_{10}}{1-\hat \rho_{01}-\hat \rho_{10}}\ell(f_{\theta}(x_{u,i}),1)+\frac{\rho_{01}-\hat \rho_{01}}{1-\hat \rho_{01}-\hat \rho_{10}}\ell(f_{\theta}(x_{u,i}),0)\\
:={}&{}\omega_{11}\ell(f_{\theta}(x_{u,i}),1)+\omega_{01}\ell(f_{\theta}(x_{u,i}),0),
\end{align*}
where $\omega_{11}$ and $\omega_{01}$ are given by
\begin{align*}
\omega_{11}=\frac{1-\rho_{01}-\hat \rho_{10}}{1-\hat \rho_{01}-\hat \rho_{10}}, \quad \omega_{01}= \frac{\rho_{01}-\hat \rho_{01}}{1-\hat \rho_{01}-\hat \rho_{10}}.
\end{align*}
Similar results hold for the user-item pairs with $r^*_{u,i}=0$ that
\begin{align*}
&\mathbb{E}_{\mathbf{R}\mid \mathbf{O}}\left[\tilde e_{u,i}\right]\\
={}&{}\frac{\rho_{10}-\hat \rho_{10}}{1-\hat \rho_{01}-\hat \rho_{10}}\ell(f_{\theta}(x_{u,i}),1)+\frac{1-\hat \rho_{01}- \rho_{10}}{1-\hat \rho_{01}-\hat \rho_{10}}\ell(f_{\theta}(x_{u,i}),0)\\
:={}&{}\omega_{10}\ell(f_{\theta}(x_{u,i}),1)+\omega_{00}\ell(f_{\theta}(x_{u,i}),0),
\end{align*}
where $\omega_{10}$ and $\omega_{00}$ are given by
\begin{align*}
\omega_{10}=\frac{\rho_{10}-\hat \rho_{10}}{1-\hat \rho_{01}-\hat \rho_{10}}, \quad \omega_{00}=\frac{1-\hat \rho_{01}- \rho_{10}}{1-\hat \rho_{01}-\hat \rho_{10}}.
\end{align*}
This completes the proof.    
\end{proof}
We formally describe double robustness under OME as follows.
\begin{corollary}[Double Robustness] Given $\hat \rho_{01}=\rho_{01}$ and $\hat \rho_{10}=\rho_{10}$, the OME-DR estimator is unbiased when either imputed errors $\bar{\mathbf{E}}$ or learned propensities $\hat{\mathbf{P}}$
are accurate for all user-item pairs.
\end{corollary}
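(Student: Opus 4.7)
The plan is to obtain the corollary as an immediate consequence of Theorem~\ref{thm3.1} by substituting $\hat \rho_{01}=\rho_{01}$ and $\hat \rho_{10}=\rho_{10}$ into the weights $\omega_{11},\omega_{01},\omega_{10},\omega_{00}$, and then simplifying the resulting bias into a single product form that visibly exhibits double robustness.

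First, I would evaluate the four weights under $\hat \rho_{01}=\rho_{01}$ and $\hat \rho_{10}=\rho_{10}$, obtaining $\omega_{11}=\omega_{00}=1$ and $\omega_{01}=\omega_{10}=0$, since the numerators $\rho_{01}-\hat \rho_{01}$ and $\rho_{10}-\hat \rho_{10}$ vanish. Plugging these values into the bias formula of Theorem~\ref{thm3.1} removes the two cross-terms weighted by $\omega_{01}$ and $\omega_{10}$ and leaves only diagonal contributions. The sum over $r^{*}_{u,i}=1$ collapses to $\sum_{r^{*}_{u,i}=1}(p_{u,i}-\hat p_{u,i})/\hat p_{u,i}\cdot \ell(f_{\theta}(x_{u,i}),1)$, and analogously for $r^{*}_{u,i}=0$, so using $e^{*}_{u,i}=\ell(f_{\theta}(x_{u,i}),r^{*}_{u,i})$ the two cases merge into $\sum_{(u,i)\in\cD}(p_{u,i}-\hat p_{u,i})/\hat p_{u,i}\cdot e^{*}_{u,i}$.

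Next, I would combine this with the remaining term $\sum_{(u,i)\in\cD}(1-p_{u,i}/\hat p_{u,i})\bar e_{u,i}$ of the bias and factor out the common ratio $(\hat p_{u,i}-p_{u,i})/\hat p_{u,i}$ to reach
\[
\operatorname{Bias}[\mathcal{E}_{\mathrm{OME-DR}}] \;=\; \frac{1}{|\cD|}\left|\sum_{(u,i)\in\cD}\frac{\hat p_{u,i}-p_{u,i}}{\hat p_{u,i}}\,(\bar e_{u,i}-e^{*}_{u,i})\right|.
\]
This is the canonical doubly-robust product structure: each summand is a product of a propensity-error factor and an imputation-error factor, so the bias vanishes whenever $\hat p_{u,i}=p_{u,i}$ for all $(u,i)$ (the propensity factor is zero) or whenever $\bar e_{u,i}=e^{*}_{u,i}$ for all $(u,i)$ (the imputation factor is zero). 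Either scenario establishes the corollary.

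Once Theorem~\ref{thm3.1} is in hand no new ingredient is needed; the only delicate point is sign-tracking when merging the $r^{*}_{u,i}=1$ and $r^{*}_{u,i}=0$ sums and when extracting the common factor $(\hat p_{u,i}-p_{u,i})/\hat p_{u,i}$, since a single errant sign would destroy the product form from which double robustness is read off. This bookkeeping is the only real obstacle, and it is routine.
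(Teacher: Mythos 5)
Your argument is correct and is essentially the paper's own route: the paper likewise obtains the corollary by substituting into the bias expression of Theorem~\ref{thm3.1}, and your additional step of factoring the bias into $\frac{1}{|\cD|}\bigl|\sum_{(u,i)\in\cD}\frac{\hat p_{u,i}-p_{u,i}}{\hat p_{u,i}}\,(\bar e_{u,i}-e^{*}_{u,i})\bigr|$ (correctly reading the theorem's first term as $(1-p_{u,i}/\hat p_{u,i})\bar e_{u,i}$, as in its proof) is a valid and clarifying simplification. One caveat: the paper defines accurate imputation as $\bar e_{u,i}=\tilde e_{u,i}$ (the surrogate loss evaluated at the noisy rating), not $\bar e_{u,i}=e^{*}_{u,i}$; since $\mathbb{E}_{\mathbf{R}\mid\mathbf{O}}[\tilde e_{u,i}]=e^{*}_{u,i}$ when $\hat\rho_{01}=\rho_{01}$ and $\hat\rho_{10}=\rho_{10}$ and the bias is linear in $\bar e_{u,i}$ (equivalently, with $\bar e_{u,i}=\tilde e_{u,i}$ the estimator collapses to $\frac{1}{|\cD|}\sum\tilde e_{u,i}$, which is unbiased for $\mathcal{P}^{*}$), that case follows in one extra line, but you should state it to cover the corollary exactly as worded.
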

The above result is obtained via substituting either $\bar e_{u, i}=\tilde e_{u, i}$ or $\hat p_{u, i}=p_{u, i}$ into the bias of the OME-DR estimator in Theorem \ref{thm3.1}.
Given the estimated error parameters $\hat \rho_{01}$ and $\hat \rho_{01}$, we obtain the optimal prediction model under OME by minimizing the OME-DR estimator over a hypothesis space $\mathcal{F}$ of the prediction models $f_{\theta}$
\begin{align*}
\hat{\mathbf{R}}^{\ddagger}=\arg\min_{f_\theta\in\mathcal{F}}\{\mathcal{E}_{\mathrm{OME-DR}}(\hat{\mathbf{R}}, \mathbf{R}^o; \hat \rho_{01}, \hat \rho_{10})\}.    
\end{align*}
We next derive the generalization bound of the optimal prediction model in terms of the empirical Rademacher complexity~\cite{shalev2014understanding}. The basic idea of proving the performance guarantee under OME is to exploit the inheritance of the Lipschitz continuity from the ture prediction loss $\ell(f_{\theta}(x_{u,i}), r^*_{u, i})$ to the surrogate loss $\tilde{\ell}(f_{\theta}(x_{u,i}),r_{u, i})$.

\begin{lemma}[Lipschitz Continuity] Given $\hat \rho_{01}$ and  $\hat \rho_{10}$ with $\hat \rho_{01}+\hat \rho_{10}<1$, if $\ell(f_{\theta}(x_{u,i}), r^*_{u, i})$ is $L$-Lipschitz in $f_{\theta}(x_{u,i})$ for all $r^*_{u, i}$, then $\tilde{\ell}(f_{\theta}(x_{u,i}),r_{u, i})$ is $\frac{2L}{1-\hat \rho_{01}-\hat \rho_{10}}$-Lipschitz in $f_{\theta}(x_{u,i})$ for all $r_{u, i}$.
\end{lemma}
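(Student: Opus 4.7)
The plan is to exploit the fact that $\tilde{\ell}(f_\theta(x_{u,i}),r_{u,i})$ has an explicit closed form as a linear combination of $\ell(f_\theta(x_{u,i}),1)$ and $\ell(f_\theta(x_{u,i}),0)$, with coefficients depending only on $\hat\rho_{01}$ and $\hat\rho_{10}$. Since Lipschitz continuity is preserved under linear combinations (with the Lipschitz constant scaled by the sum of the absolute values of the coefficients), the result reduces to bounding those coefficients and multiplying by $L$.

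Concretely, I would fix an arbitrary pair of predictions $a,b \in \mathbb{R}$ (standing for two values of $f_\theta(x_{u,i})$) and split into the cases $r_{u,i}=1$ and $r_{u,i}=0$. For $r_{u,i}=1$, substituting the formula for $\tilde{\ell}(\cdot,1)$ and applying the triangle inequality gives
\begin{align*}
|\tilde{\ell}(a,1)-\tilde{\ell}(b,1)|
&\le \frac{(1-\hat\rho_{10})\,|\ell(a,1)-\ell(b,1)| + \hat\rho_{01}\,|\ell(a,0)-\ell(b,0)|}{1-\hat\rho_{01}-\hat\rho_{10}}\\
&\le \frac{(1-\hat\rho_{10}+\hat\rho_{01})\,L}{1-\hat\rho_{01}-\hat\rho_{10}}\,|a-b|,
\end{align*}
where the second line uses the $L$-Lipschitz property of $\ell(\cdot,r^*)$ for each fixed $r^*\in\{0,1\}$. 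The analogous computation for $r_{u,i}=0$ yields the coefficient $(1-\hat\rho_{01}+\hat\rho_{10})L/(1-\hat\rho_{01}-\hat\rho_{10})$. Finally, since $\hat\rho_{01},\hat\rho_{10}\in[0,1]$, both numerators $1-\hat\rho_{10}+\hat\rho_{01}$ and $1-\hat\rho_{01}+\hat\rho_{10}$ are bounded above by $2$, so both cases are dominated by $2L/(1-\hat\rho_{01}-\hat\rho_{10})$, which is the claimed constant.

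There is no real obstacle here: the assumption $\hat\rho_{01}+\hat\rho_{10}<1$ is exactly what guarantees positivity of the denominator, and the step that could in principle fail (the triangle inequality absorbing a minus sign in front of $\hat\rho_{01}\ell(\cdot,0)$) is immediately handled by taking absolute values of the coefficients. The only mild subtlety worth noting in the write-up is that the factor $2$ in the stated bound is slightly loose: one actually obtains a uniform constant $\max\{1-\hat\rho_{10}+\hat\rho_{01},\,1-\hat\rho_{01}+\hat\rho_{10}\}L/(1-\hat\rho_{01}-\hat\rho_{10})$, which is $\le 2L/(1-\hat\rho_{01}-\hat\rho_{10})$; the cleaner form in the lemma is what is needed downstream for the generalization bound, so I would simply state the rounding-up step and conclude.
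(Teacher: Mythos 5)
Your proposal is correct and follows essentially the same route as the paper: the paper's proof simply writes out the closed forms of $\tilde{\ell}(\cdot,1)$ and $\tilde{\ell}(\cdot,0)$ and asserts the conclusion follows from $\hat\rho_{01}+\hat\rho_{10}<1$, and your triangle-inequality argument bounding the coefficient sums $1-\hat\rho_{10}+\hat\rho_{01}$ and $1-\hat\rho_{01}+\hat\rho_{10}$ by $2$ is exactly the step the paper leaves implicit. Your remark that the constant $2L/(1-\hat\rho_{01}-\hat\rho_{10})$ is a slight rounding up of the sharper bound is accurate and harmless for the downstream generalization result.
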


The above lemma immediately leads to a generalization bound with respect to the true ratings by using the contraction principle for Rademacher complexity (see Appendix \ref{app-a} for proofs).

\begin{theorem}[Generalization Bound] Given $\hat \rho_{01}$ and  $\hat \rho_{10}$ with $\hat \rho_{01}+\hat \rho_{10}<1$, suppose $\ell(f_{\theta}(x_{u,i}), r^*_{u, i})$ is $L$-Lipschitz in $f_{\theta}(x_{u,i})$ for all $r^*_{u, i}$, and $\hat p_{u, i}\geq C_p$, $|\tilde{\ell}(f_{\theta}(x_{u,i}),r_{u, i})|\leq C_l$ for all $r_{u, i}$, then with probability $1-\eta$, the true prediction inaccuracy $\mathcal{P}(\hat{\mathbf{R}}^\ddagger, \mathbf{R}^*)$ of the optimal prediction matrix using the OME-DR estimator with imputed errors $\bar{\mathbf{E}}$ and learned propensities $\hat{\mathbf{P}}$ has the upper bound
\begin{align*}
&\mathcal{E}_{\mathrm{OME-DR}}(\hat{\mathbf{R}}^\ddagger, \mathbf{R}^o; \hat \rho_{01}, \hat \rho_{10})+\operatorname{Bias}[\mathcal{E}_{\mathrm{OME-DR}}(\hat{\mathbf{R}}^\ddagger, \mathbf{R}^o; \hat \rho_{01}, \hat \rho_{10})]+\\
&\left(1+\frac{2}{C_p}\right)\left[\frac{4L}{1-\hat \rho_{01}-\hat \rho_{10}}\mathcal{R}(\mathcal{F})+\left(C_l+\frac{4L}{1-\hat \rho_{01}-\hat \rho_{10}}\right)\sqrt{\frac{2\log (4 / \eta)}{|\mathcal{D}|}}\right],
\end{align*}
where $\mathcal{R}(\mathcal{F})$ is the empirical Rademacher complexity defined as
    \[\mathcal{R}(\mathcal{F})=\mathbb{E}_{\mathbf{\sigma} \sim\{-1,+1\}^{|\mathcal{D}|}} \sup _{f_\theta \in \mathcal{F}}\left[\frac{1}{|\mathcal{D}|} \sum_{(u, i) \in \mathcal{D}} \sigma_{u, i} f_{\theta}(x_{u, i})\right],\]
in which $\mathbf{\sigma}=\{\sigma_{u, i}: (u, i)\in \cD\}$, and $\sigma_{u, i}$ are independent uniform random variables taking values in $\{-1,+1\}$. The random variables $\sigma_{u, i}$ are called Rademacher variables.
\end{theorem}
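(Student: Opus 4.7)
The plan is to decompose the true prediction inaccuracy $\mathcal{P}(\hat{\mathbf{R}}^\ddagger,\mathbf{R}^*)$ into three terms: the value of the OME-DR estimator at $\hat{\mathbf{R}}^\ddagger$, its systematic bias (already controlled by Theorem~\ref{thm3.1}), and a stochastic deviation term of the form $\sup_{f_\theta\in\mathcal{F}}|\mathcal{E}_{\mathrm{OME-DR}}(\hat{\mathbf{R}},\mathbf{R}^o;\hat\rho_{01},\hat\rho_{10})-\mathbb{E}[\mathcal{E}_{\mathrm{OME-DR}}]|$. Since $\hat{\mathbf{R}}^\ddagger$ minimizes the estimator, the first two terms give the empirical objective plus its bias, and the whole generalization problem reduces to uniformly bounding this last supremum over $\mathcal{F}$.

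First I would split the OME-DR summand into an imputation part $(1-o_{u,i}/\hat p_{u,i})\bar e_{u,i}$ and an observed-correction part $(o_{u,i}/\hat p_{u,i})\tilde{\ell}(f_\theta(x_{u,i}),r_{u,i})$, whose magnitudes are controlled by $C_l$ and by $C_l/C_p$ respectively under the assumption $\hat p_{u,i}\ge C_p$ and $|\tilde\ell|\le C_l$. The bounded-differences constant of each summand is therefore at most a multiple of $(1+2/C_p)C_l/|\mathcal{D}|$, so McDiarmid's inequality yields, with probability $1-\eta/2$, a concentration term of order $(1+2/C_p)C_l\sqrt{\log(4/\eta)/(2|\mathcal{D}|)}$ which matches the corresponding piece of the stated bound.

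Next I would apply the standard symmetrization argument to the expected supremum, converting it into a Rademacher complexity of the composed loss class. The Lipschitz Lemma gives that $\tilde\ell(f_\theta(x_{u,i}),r_{u,i})$ is $\tfrac{2L}{1-\hat\rho_{01}-\hat\rho_{10}}$-Lipschitz in $f_\theta(x_{u,i})$ for every $r_{u,i}$, and the factors $(1-o_{u,i}/\hat p_{u,i})$ and $o_{u,i}/\hat p_{u,i}$ are bounded in modulus by $1$ and $1/C_p$ respectively, so each part of the summand is Lipschitz in $f_\theta(x_{u,i})$ with constant at most $(1+1/C_p)\cdot\tfrac{2L}{1-\hat\rho_{01}-\hat\rho_{10}}$. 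Talagrand's contraction principle then lets me replace the complexity of the loss class by $(1+2/C_p)\cdot\tfrac{2L}{1-\hat\rho_{01}-\hat\rho_{10}}\mathcal{R}(\mathcal{F})$; doubling to pass from expected supremum to supremum plus concentration (or applying McDiarmid a second time with probability $1-\eta/2$ and a union bound) produces the factor $4L/(1-\hat\rho_{01}-\hat\rho_{10})$ displayed in the theorem.

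Finally I would assemble the pieces: add the bias from Theorem~\ref{thm3.1}, the empirical OME-DR value at $\hat{\mathbf{R}}^\ddagger$ (which is the minimum over $\mathcal{F}$, hence bounds the value at the population-optimal predictor), and the two concentration/Rademacher contributions, merging constants inside the single prefactor $(1+2/C_p)$. The main obstacle I anticipate is bookkeeping the Lipschitz/boundedness constants correctly when both the imputed-error term and the importance-weighted surrogate-loss term are present simultaneously, because a naive contraction argument can double-count the $(1-o_{u,i}/\hat p_{u,i})$ vs $o_{u,i}/\hat p_{u,i}$ factors; the trick is to treat the full summand as a single Lipschitz function of $f_\theta(x_{u,i})$ per $(u,i)$, which gives the tight constant $(1+2/C_p)\cdot 2L/(1-\hat\rho_{01}-\hat\rho_{10})$ before the factor-$2$ symmetrization is applied.
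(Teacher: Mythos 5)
Your proposal follows essentially the same route as the paper's proof: the same decomposition of $\mathcal{P}^*$ into the empirical OME-DR value, its bias, and a uniform deviation term, McDiarmid's inequality with bounded-differences constant proportional to $C_l(1+2/C_p)/|\mathcal{D}|$ for the deviation, and symmetrization plus Talagrand's contraction with the Lipschitz lemma (yielding the prefactor $2\cdot(1+2/C_p)\cdot\frac{2L}{1-\hat\rho_{01}-\hat\rho_{10}}$), followed by a second McDiarmid application at level $\eta/2$ to pass from the expected to the empirical Rademacher complexity. The argument and constants match the paper's Appendix~\ref{app-a} proof, so the proposal is correct as outlined.
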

\begin{table*}[]
\vspace{-6pt}
\centering
\caption{RE on \textsc{ML-100K} dataset with $\rho_{01}$ = 0.2 and $\rho_{10}$ = 0.1. The best two results are bolded and the best baseline is underlined.}
\vspace{-10pt}
\setlength{\tabcolsep}{7pt}
\resizebox{0.968\linewidth}{!}{
\begin{tabular}{l|cccccc}
\toprule
  & ROTATE  & SKEW   & CRS  & ONE & THREE  & FIVE 
\\ \midrule
Naive \cite{koren2009matrix} & 0.125 $\pm$ 0.002 & 0.179 $\pm$ 0.001 & 0.175 $\pm$ 0.002 & 0.241 $\pm$ 0.002 & 0.264 $\pm$ 0.003 & 0.299 $\pm$ 0.003  \\

OME ($\hat{\rho}_{01}, \hat{\rho}_{10}$) & 0.087 $\pm$ 0.006 & 0.163 $\pm$ 0.002 & 0.104 $\pm$ 0.014 & 0.161 $\pm$ 0.013 & 0.179 $\pm$ 0.014 & 0.213 $\pm$ 0.014  \\

OME ($\rho_{01}, {\rho}_{10}$) & \underline{0.024 $\pm$ 0.003} & 0.136 $\pm$ 0.002 & 0.105 $\pm$ 0.004 & \underline{0.067 $\pm$ 0.004} & \underline{0.076 $\pm$ 0.004} & \underline{0.100 $\pm$ 0.004}  \\

MRDR \cite{MRDR} &  0.098 $\pm$ 0.003 & 0.089 $\pm$ 0.001 & \underline{0.099 $\pm$ 0.002} & 0.172 $\pm$ 0.004 & 0.190 $\pm$ 0.004 & 0.194 $\pm$ 0.004 \\
SDR \cite{SDR} & 0.097 $\pm$ 0.002 & 0.089 $\pm$ 0.001 & 0.102 $\pm$ 0.002 & 0.172 $\pm$ 0.003 & 0.191 $\pm$ 0.003 & 0.194 $\pm$ 0.003 \\ 
TDR \cite{TDR} & 0.092 $\pm$ 0.002 & \underline{0.080 $\pm$ 0.001} & 0.103 $\pm$ 0.002 & 0.171 $\pm$ 0.003 & 0.189 $\pm$ 0.003 & 0.195 $\pm$ 0.003 \\
\midrule
EIB \cite{Steck2010} & 0.366 $\pm$ 0.001 & 0.256 $\pm$ 0.001 & 0.150 $\pm$ 0.001 & 0.562 $\pm$ 0.001 & 0.605 $\pm$ 0.001 & 0.635 $\pm$ 0.001 \\
OME-EIB ($\hat{\rho}_{01}, \hat{\rho}_{10}$) & 0.362 $\pm$ 0.001 & 0.255 $\pm$ 0.001 & 0.144 $\pm$ 0.001 & 0.554 $\pm$ 0.001 & 0.598 $\pm$ 0.001 & 0.627 $\pm$ 0.001 \\
OME-EIB (${\rho}_{01}, {\rho}_{10}$) & 0.357 $\pm$ 0.001 & 0.253 $\pm$ 0.001 & 0.144 $\pm$ 0.001 & 0.546 $\pm$ 0.001 & 0.589 $\pm$ 0.001 & 0.618 $\pm$ 0.001 \\
\midrule
IPS  \cite{Schnabel-Swaminathan2016} & 0.110 $\pm$ 0.002 & 0.116 $\pm$ 0.002 & 0.134 $\pm$ 0.003 & 0.212 $\pm$ 0.004 & 0.231 $\pm$ 0.004 & 0.254 $\pm$ 0.004 \\
OME-IPS ($\hat{\rho}_{01}, \hat{\rho}_{10}$) & 0.060 $\pm$ 0.003 & 0.096 $\pm$ 0.002 & 0.075 $\pm$ 0.017 & 0.111 $\pm$ 0.006 & 0.125 $\pm$ 0.007 & 0.144 $\pm$ 0.007 \\ 
OME-IPS (${\rho}_{01}, {\rho}_{10}$) & \textbf{0.013 $\pm$ 0.003$^*$} & \textbf{0.068 $\pm$ 0.003$^*$} & \textbf{0.052 $\pm$ 0.004$^*$} & \textbf{0.034 $\pm$ 0.005$^*$} & \textbf{0.038 $\pm$ 0.006$^*$} & \textbf{0.050 $\pm$ 0.005$^*$} \\ 
\midrule
DR \cite{saito2020doubly} & 0.106 $\pm$ 0.002 & 0.087 $\pm$ 0.001 & 0.104 $\pm$ 0.002 & 0.190 $\pm$ 0.003 & 0.209 $\pm$ 0.003 & 0.216 $\pm$ 0.003 \\
OME-DR ($\hat{\rho}_{01}, \hat{\rho}_{10}$) & 0.056 $\pm$ 0.004 & 0.067 $\pm$ 0.001 & 0.045 $\pm$ 0.018 & 0.090 $\pm$ 0.006 & 0.102 $\pm$ 0.006 & 0.106 $\pm$ 0.007 \\ 
OME-DR (${\rho}_{01}, {\rho}_{10}$) & \textbf{0.009 $\pm$ 0.003$^*$} & \textbf{0.039 $\pm$ 0.002$^*$} & \textbf{0.022 $\pm$ 0.003$^*$} & \textbf{0.013 $\pm$ 0.004$^*$} & \textbf{0.016 $\pm$ 0.005$^*$} & \textbf{0.012 $\pm$ 0.004$^*$} \\ 
 \bottomrule 
 \end{tabular}}
 \label{tab:semi1}%
\begin{tablenotes}
\scriptsize
\item Note: * means statistically significant results ($\text{p-value} \leq 0.05$) using the paired-t-test compared with the best baseline.
\end{tablenotes}
\vspace{-6pt}
\end{table*}
\subsection{Alternating Denoise Training Approach}
\begin{algorithm}[t]
\caption{Alternating Denoise Training with OME-DR.}
\label{alg.1}
\LinesNumbered 
\KwIn{observed ratings $\mathbf{R}^o$, learned propensities $\hat p_{u,i}$, initial estimates $\hat \rho_{01}$ and $\hat \rho_{10}$, a pre-trained model $h_{\theta^\prime}(x_{u, i})$ for estimating $\P(r_{u, i}=1\mid x_{u, i})$ via existing method.}
\While{stopping criteria is not satisfied}{
    \For{number of steps for training the denoising prediction model}{Sample a batch $\{(u_{j}, i_{j})\}_{j=1}^{J}$ from $\mathcal{D}$\;
    Update $\theta$ by descending along the gradient $\nabla_{\theta} \mathcal{E}_\mathrm{OME-DR}(\theta, \phi;\hat \rho_{01}, \hat \rho_{10})$\;
    $(u^{<}, i^{<})\gets\arg\min_{(u, i) \in \cD} f_{\theta}(x_{u, i})$\; 
    $(u^{>}, i^{>})\gets\arg\max_{(u, i) \in \cD} f_{\theta}(x_{u, i})$\;} 

\For{number of steps for training the denoising imputation model}{
        $\hat \rho_{01}\gets 1-h_{\theta^\prime}(x_{u^{>}, i^{>}})$ and $\hat \rho_{10}\gets h_{\theta^\prime}(x_{u^{<}, i^{<}})$\;
    Sample a batch $\{(u_{k}, i_{k})\}_{k=1}^{K}$ from {$\mathcal{O}$}\;
    Update $\phi$ by descending along the gradient $\nabla_{\phi} \cL_{\bar e}(\theta, \phi;\hat \rho_{01}, \hat \rho_{10})$\; 
    }
}
\end{algorithm}
We further propose an alternating denoise training approach to achieve unbiased learning of the prediction model, which corrects for data MNAR and OME in parallel. Specifically, we first train a propensity model using the observed MNAR data. Based on the estimated propensities, the denoising prediction and imputation models are alternately updated, which also facilitates the accurate estimations $\hat \rho_{01}$ and $\hat \rho_{10}$ of the error parameters $\rho_{01}$ and $\rho_{10}$.

\vspace{4pt}\noindent{\textbf{Propensity Estimation via Logistic Regression.}} Since unbiased ratings are difficult to be collected in the real-world scenarios~\cite{saito2019towards}, following previous studies~\cite{Schnabel-Swaminathan2016,TDR}, we adopt logistic regression to train a propensity model $\hat p_{u,i}=\sigma(w^\top x_{u, i}+\beta_u+\gamma_i)$ parameterized by $\psi=(w, \beta_1, \dots, \beta_N, \gamma_1, \dots, \gamma_M)$ using the observed MNAR data, where $\sigma(\cdot)$ is the sigmoid function, $w\in \mathbb{R}^{N+M}$ is the weight vector, $\beta_u$ and $\gamma_i$ are the user-specific and item-specific constants, respectively. We train the propensity model by
minimizing the loss 
\begin{align*}
\mathcal{L}_p(\psi)=-\frac{1}{|\cD|}\sum_{(u, i)\in\cD} o_{u, i}\log (\hat p_{u, i})+(1-o_{u, i})\log (1-\hat p_{u, i}).
\end{align*}

\noindent{\textbf{Denoising Prediction Model Training.}}  Based on the learned propensities $\hat p_{u, i}$, imputed errors $\bar e_{u, i}(\phi)$, and initial estimates $\hat \rho_{01}$ and $\hat \rho_{10}$, we train the denoising prediction model by minimizing the estimated true prediction inaccuracy from the proposed OME-DR estimator $\mathcal{E}_\mathrm{OME-DR}(\theta, \phi;\hat \rho_{01}, \hat \rho_{10})$. To obtain more accurate estimates $\hat \rho_{01}$ and $\hat \rho_{10}$ from the training loop, we then compute the minimum and maximum predicted ratings in the training batch to update $(u^{<}, i^{<})$ and $(u^{>}, i^{>})$, respectively. 
The choice not to employ a separate noisy rating prediction model for computing $(u^{<}, i^{<})$ and $(u^{>}, i^{>})$ is guaranteed by the 
monotonicity between $\P(r_{u, i}=1\mid x_{u, i})$ and $\P(r^*_{u, i}=1\mid x_{u, i})$ (see Sec. \ref{sec:3.3} for proofs), where the latter has a larger gap between the minimum and the maximum, making it easier to distinguish $(u^{<}, i^{<})$ and $(u^{>}, i^{>})$.

\vspace{4pt}\noindent{\textbf{Denoising Imputation Model Training.}} Given a pre-trained model $h_{\theta^\prime}(x_{u, i})$ for estimating $\P(r_{u, i}=1\mid x_{u, i})$ via existing methods~\cite{Schnabel-Swaminathan2016,Wang-Zhang-Sun-Qi2019}, we update the estimates $\hat \rho_{01}$ and $\hat \rho_{10}$ by computing $1-h_{\theta^\prime}(x_{u^{>}, i^{>}})$ and $h_{\theta^\prime}(x_{u^{<}, i^{<}})$, respectively. We finally train the denoising imputation model $\bar e_{u, i}(\phi)$ by minimizing the loss 
\[
\cL_{\bar e}(\theta, \phi;\hat \rho_{01}, \hat \rho_{10})=\frac{1}{|\cD|} \sum_{(u,i)\in \cD} \frac{o_{u, i}(\tilde e_{u, i}-{\bar{e}}_{u, i})^2}{\hat p_{u, i}}.
\]
The overall training process with OME-DR is summarized in Alg. \ref{alg.1}.

\begin{figure*}[t]
\centering
\vspace{-12pt}
\resizebox{1\linewidth}{!}{
\subfigure[Estimation error of $\hat{\rho}_{01}$ and $\hat{\rho}_{10}$]{
\begin{minipage}[t]{0.2275\linewidth}
\centering
\includegraphics[width=1\textwidth]{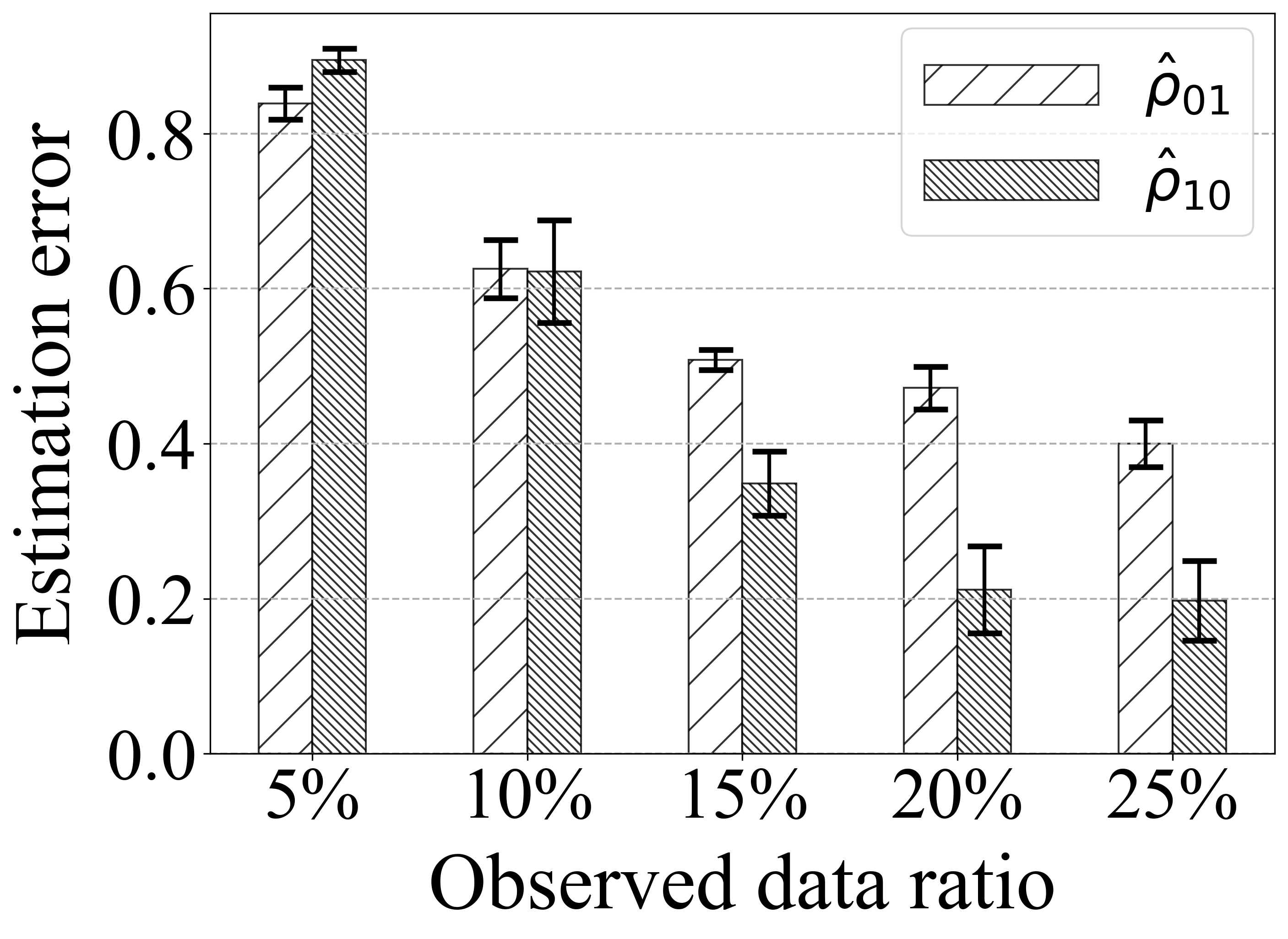}
\end{minipage}%
}%
\subfigure[RE on EIB-based methods]{
\begin{minipage}[t]{0.225\linewidth}
\centering
\includegraphics[width=1\textwidth]{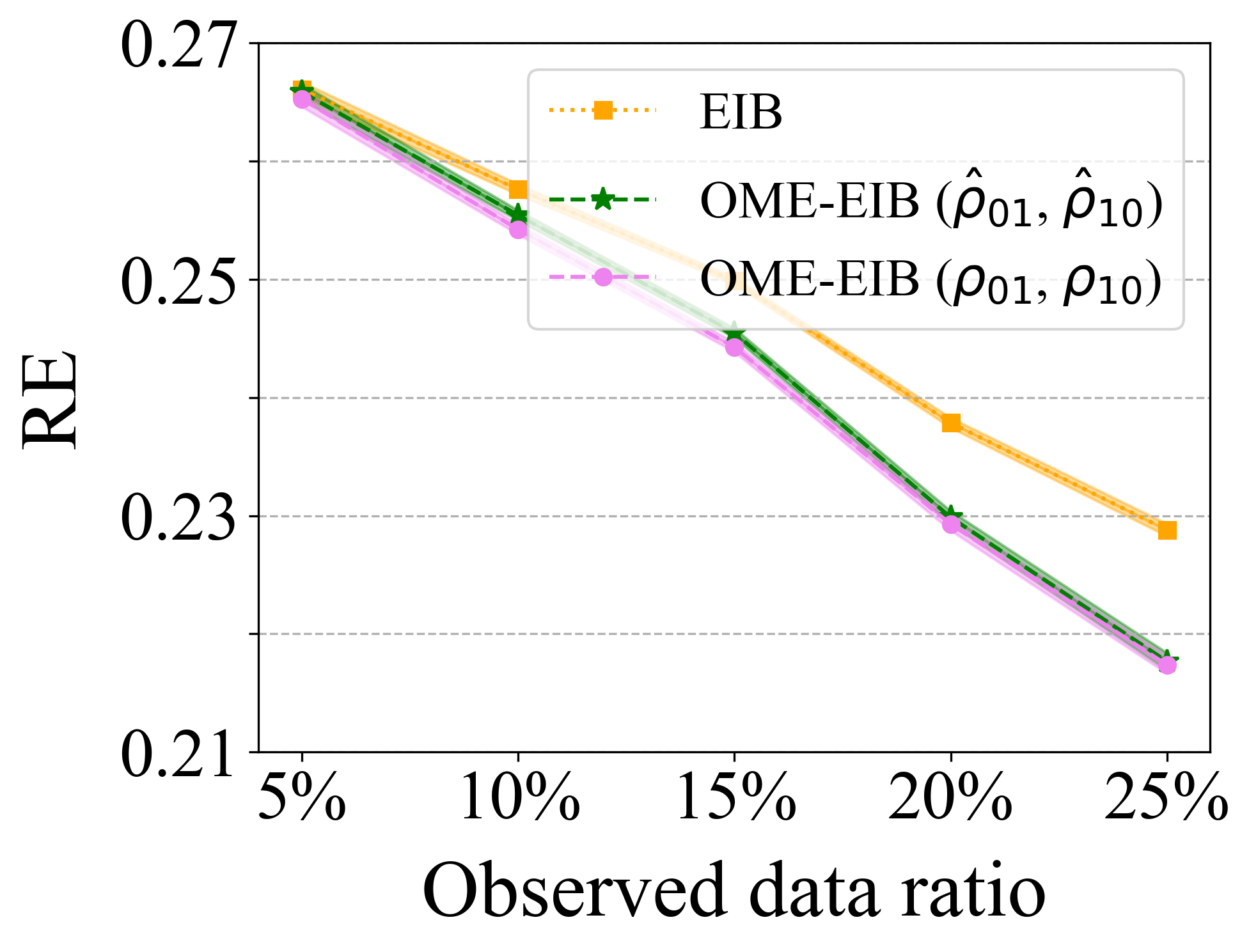}
\end{minipage}%
}
\centering
\subfigure[RE on IPS-based methods]{
\begin{minipage}[t]{0.23\linewidth}
\centering
\includegraphics[width=1\textwidth]{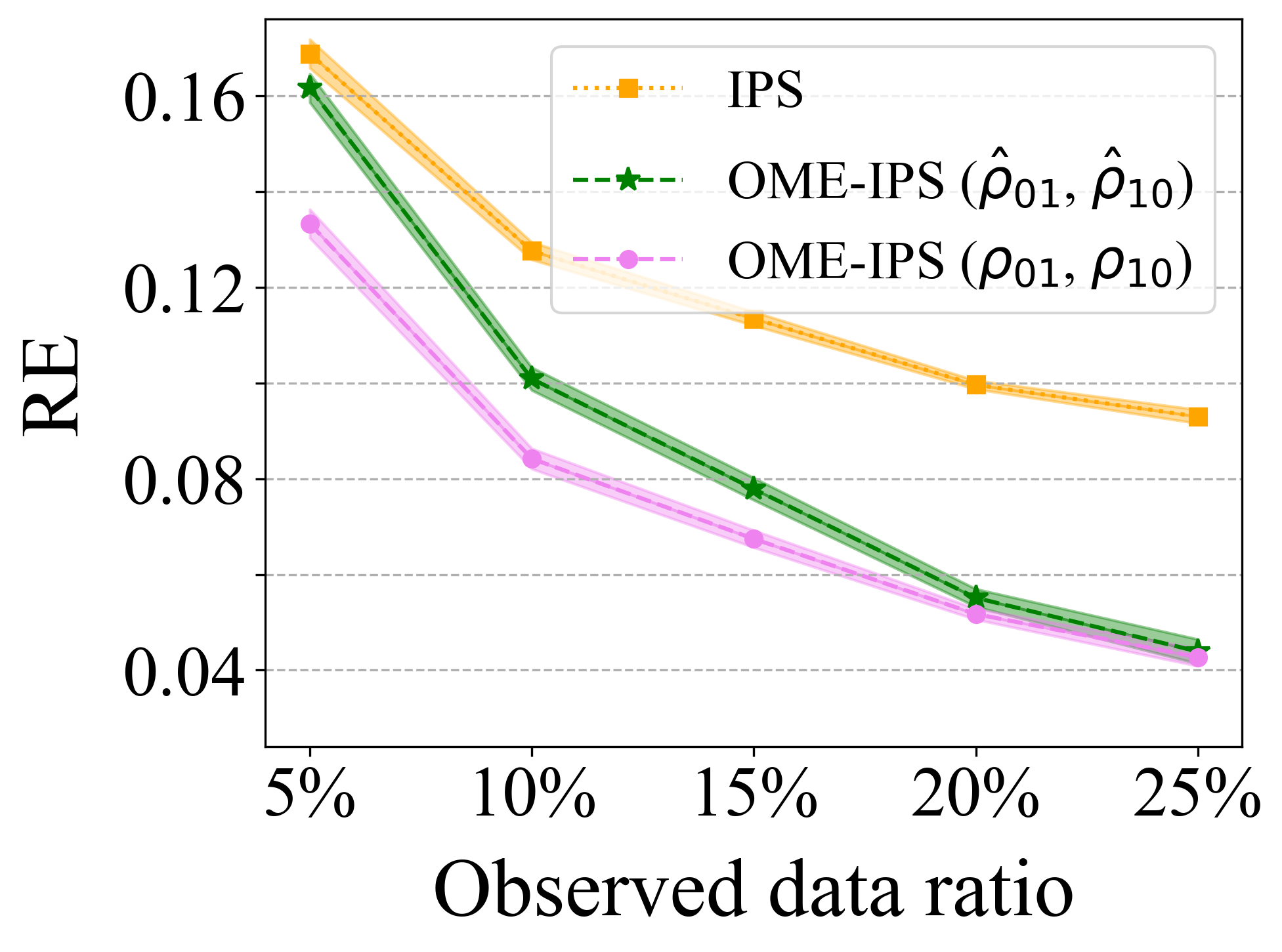}
\end{minipage}%
}%
\subfigure[RE on DR-based methods]{
\begin{minipage}[t]{0.225\linewidth}
\centering
\includegraphics[width=1\textwidth]{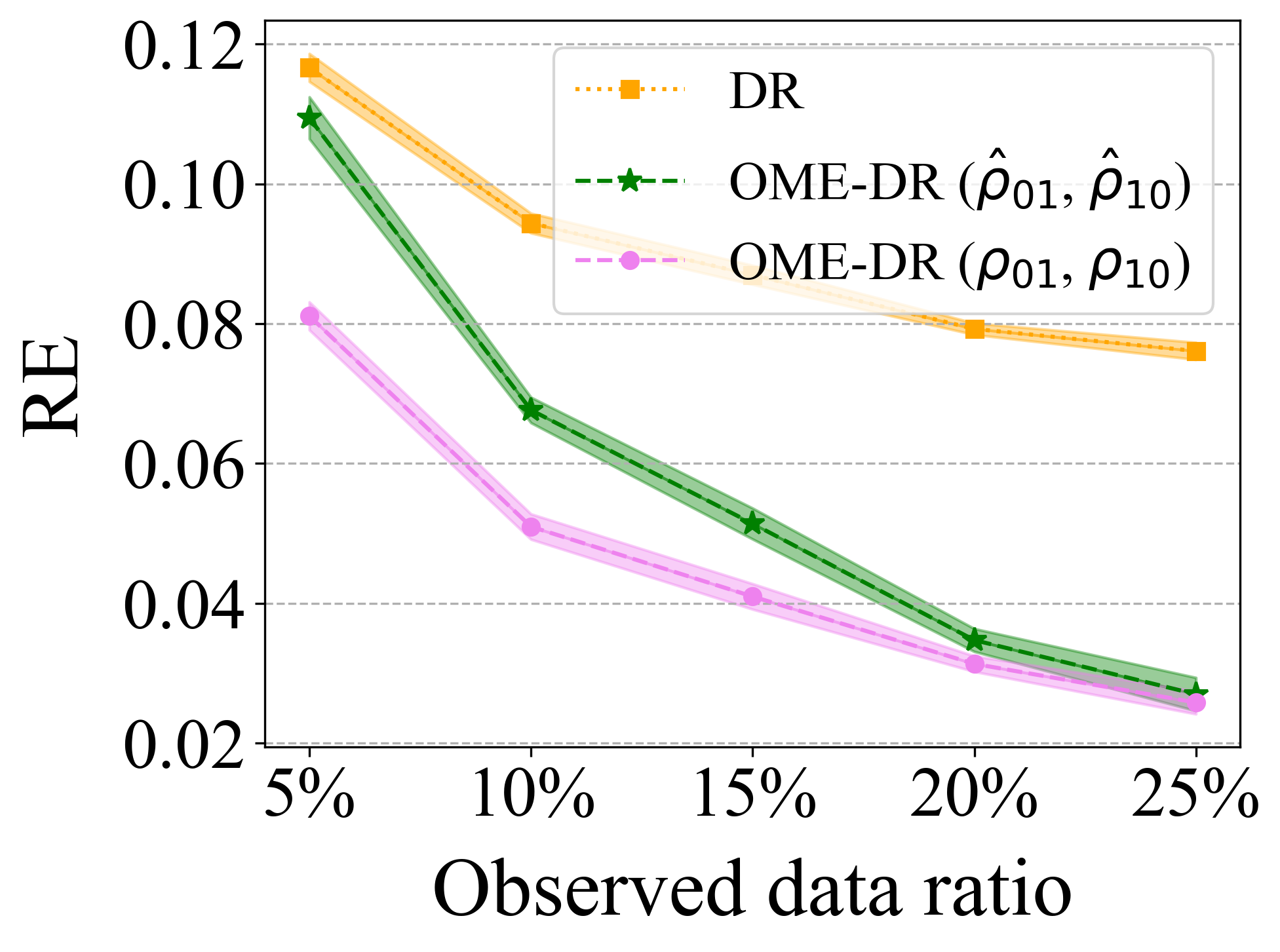}
\end{minipage}%
}}%
\centering
\vspace{-12pt}
\caption{Estimation errors of $\hat{\rho}_{01}$ and $\hat{\rho}_{10}$ and RE with varying observed data ratios.}
\vspace{-6pt}
\label{fig:semi}
\end{figure*}
\begin{table*}[]
\centering
\setlength{\tabcolsep}{3pt}
\caption{Performance on AUC, NDCG@K, and Recall@K on the MAR test set of \textsc{Coat}, \textsc{Music} and \textsc{KuaiRec} with $\rho_{01}$ = 0.2 and $\rho_{10}$ = 0.1. The best three results are bolded, and the best baseline is underlined.}
\vspace{-8pt}
\resizebox{0.945\linewidth}{!}{
\begin{tabular}{l|ccc | ccc | ccc}
\toprule
\multicolumn{1}{c|}{}    & \multicolumn{3}{c|}{\textsc{Coat}} & \multicolumn{3}{c|}{\textsc{Music}}      & \multicolumn{3}{c}{\textsc{KuaiRec}}         \\ \midrule
 Method & AUC     & NDCG@5        & Recall@5 & AUC     & NDCG@5        & Recall@5          &   AUC     & NDCG@50        & Recall@50  
\\ \midrule

MF~\cite{koren2009matrix} & 0.619$_{\pm0.002}$ & 0.531$_{\pm0.005}$ & 0.364$_{\pm0.005}$ & 0.632$_{\pm0.005}$ & 0.616$_{\pm0.003}$ & 0.361$_{\pm0.001}$ & 0.621$_{\pm0.002}$ & 0.652$_{\pm0.003}$ & 0.702$_{\pm0.002}$ \\ 
\midrule
OME~\cite{liu2015classification} & 0.621$_{\pm0.003}$ & 0.543$_{\pm0.006}$ & 0.371$_{\pm0.007}$ & 0.650$_{\pm0.005}$ & 0.615$_{\pm0.004}$ & 0.373$_{\pm0.004}$ & 0.645$_{\pm0.002}$ & 0.660$_{\pm0.002}$ & 0.722$_{\pm0.002}$ \\

T-MF~\cite{wang2021denoising} & 0.633$_{\pm0.004}$ & 0.540$_{\pm0.005}$ & 0.372$_{\pm0.006}$ & 0.650$_{\pm0.005}$ & 0.614$_{\pm0.003}$ & 0.376$_{\pm0.004}$ & 0.657$_{\pm0.002}$ & 0.665$_{\pm0.002}$ & 0.739$_{\pm0.003}$ \\

R-MF~\cite{wang2021denoising} & 0.641$_{\pm0.004}$ & 0.544$_{\pm0.005}$ & 0.376$_{\pm0.009}$ & 0.657$_{\pm0.005}$ & 0.614$_{\pm0.006}$ & 0.371$_{\pm0.004}$ & 0.645$_{\pm0.002}$ & 0.661$_{\pm0.001}$ & 0.720$_{\pm0.003}$ \\
LCD-MF~\cite{dai2022lcd} & 0.631$_{\pm0.003}$ & 0.552$_{\pm0.007}$ & \underline{\textbf{0.381}}$_{\pm0.006}$ & 0.659$_{\pm0.005}$ & 0.617$_{\pm0.005}$ & 0.374$_{\pm0.004}$ & 0.664$_{\pm0.003}$ & 0.677$_{\pm0.003}$ & 0.738$_{\pm0.003}$ \\
\midrule
EIB~\cite{Steck2010} & 0.621$_{\pm0.003}$ & 0.547$_{\pm0.006}$ & 0.368$_{\pm0.008}$ & 0.648$_{\pm0.003}$ & 0.620$_{\pm0.002}$ & 0.371$_{\pm0.002}$  & 0.641$_{\pm0.001}$ & 0.653$_{\pm0.001}$ & 0.713$_{\pm0.002}$\\

IPS~\cite{Schnabel-Swaminathan2016} & 0.625$_{\pm0.002}$ & 0.534$_{\pm0.009}$ & 0.360$_{\pm0.008}$ &  0.648$_{\pm0.001}$ & 0.615$_{\pm0.001}$ & 0.369$_{\pm0.001}$ & 0.636$_{\pm0.001}$ & 0.642$_{\pm0.002}$ & 0.714$_{\pm0.001}$ \\  

SNIPS~\cite{Schnabel-Swaminathan2016} & 0.631$_{\pm0.004}$ & 0.544$_{\pm0.008}$ & 0.369$_{\pm0.007}$ &  0.659$_{\pm0.005}$ & 0.617$_{\pm0.005}$ & 0.375$_{\pm0.004}$ & 0.657$_{\pm0.002}$ & 0.670$_{\pm0.001}$ & 0.735$_{\pm0.001}$   \\

CVIB~\cite{CVIB} & 0.626$_{\pm0.001}$ & 0.552$_{\pm0.009}$ & 0.379$_{\pm0.004}$ & 0.661$_{\pm0.001}$ & 0.612$_{\pm0.002}$ & 0.381$_{\pm0.001}$ & 0.653$_{\pm0.001}$ & 0.644$_{\pm0.001}$ & 0.736$_{\pm0.001}$  \\

DAMF~\cite{saito2019towards} & 0.631$_{\pm0.004}$ & 0.548$_{\pm0.006}$ & 0.367$_{\pm0.006}$ & 0.640$_{\pm0.003}$  & \textbf{\underline{0.627$_{\pm0.002}$}} & 0.377$_{\pm0.004}$ & 0.657$_{\pm0.003}$ & 0.669$_{\pm0.003}$ & 0.735$_{\pm0.004}$ \\

DR~\cite{saito2020doubly} & 0.633$_{\pm0.009}$ & 0.548$_{\pm0.015}$ & 0.360$_{\pm0.011}$ & 0.654$_{\pm0.009}$ & 0.613$_{\pm0.007}$ & 0.372$_{\pm0.005}$ & 0.643$_{\pm0.002}$ & 0.650$_{\pm0.001}$ & 0.721$_{\pm0.001}$ \\

DR-JL~\cite{Wang-Zhang-Sun-Qi2019} & 0.633$_{\pm0.005}$ & 0.546$_{\pm0.006}$ & 0.366$_{\pm0.007}$ & 0.650$_{\pm0.002}$ & 0.617$_{\pm0.001}$ & 0.382$_{\pm0.001}$  & 0.644$_{\pm0.001}$ & 0.669$_{\pm0.002}$ & 0.725$_{\pm0.002}$ \\

MRDR-JL~\cite{MRDR} & 0.634$_{\pm0.005}$ & 0.538$_{\pm0.010}$ & 0.373$_{\pm0.008}$ & 0.659$_{\pm0.002}$ & 0.624$_{\pm0.007}$ & 0.376$_{\pm0.006}$ & 0.658$_{\pm0.003}$ & 0.670$_{\pm0.002}$ & \underline{0.741$_{\pm0.003}$} \\

DR-BIAS~\cite{Dai-etal2022} & 0.632$_{\pm0.004}$ & 0.549$_{\pm0.009}$ &  0.369$_{\pm0.008}$ & 0.660$_{\pm0.001}$ & 0.619$_{\pm0.003}$ & 0.378$_{\pm0.002}$  & \textbf{\underline{0.665$_{\pm0.004}$}} & \underline{\textbf{0.679}$_{\pm0.004}$} & 0.737$_{\pm0.002}$ \\

DR-MSE~\cite{Dai-etal2022} & 0.631$_{\pm0.003}$ & \textbf{\underline{0.557$_{\pm0.008}$}} & 0.374$_{\pm0.008}$ & \underline{0.662$_{\pm0.002}$} & 0.623$_{\pm0.004}$ & 0.374$_{\pm0.002}$  & 0.653$_{\pm0.002}$ & 0.660$_{\pm0.002}$ & 0.732$_{\pm0.002}$ \\

DIB~\cite{liu2021mitigating} & 0.626$_{\pm0.003}$ & 0.550$_{\pm0.009}$ & 0.376$_{\pm0.006}$ & 0.647$_{\pm0.005}$ & \textbf{0.626}$_{\pm0.005}$ & 0.382$_{\pm0.006}$ & 0.641$_{\pm0.002}$ & 0.642$_{\pm0.003}$ & 0.738$_{\pm0.003}$ \\

MR~\cite{li-etal-MR2023} & 0.630$_{\pm0.005}$ & 0.546$_{\pm0.009}$ & 0.376$_{\pm0.007}$ & 0.656$_{\pm0.005}$ & 0.623$_{\pm0.004}$ & 0.380$_{\pm0.004}$ & 0.637$_{\pm0.002}$ & 0.644$_{\pm0.004}$ & 0.723$_{\pm0.002}$ \\

SDR~\cite{SDR} & 0.635$_{\pm0.005}$ & 0.543$_{\pm0.005}$ & 0.368$_{\pm0.006}$ &  0.660$_{\pm0.006}$ & 0.618$_{\pm0.006}$ & 0.376$_{\pm0.005}$ & 0.661$_{\pm0.003}$ & 0.675$_{\pm0.001}$ & 0.739$_{\pm0.002}$ \\

TDR~\cite{TDR} &  \textbf{\underline{0.638$_{\pm0.007}$}} & 0.556$_{\pm0.014}$ & 0.370$_{\pm0.013}$ &  0.657$_{\pm0.004}$ & 0.616$_{\pm0.002}$ & 0.371$_{\pm0.003}$  & 0.650$_{\pm0.002}$ & 0.670$_{\pm0.001}$ & 0.723$_{\pm0.003}$ \\

IPS-V2~\cite{li2023propensity} & 0.631$_{\pm0.005}$ & 0.549$_{\pm0.008}$ & 0.373$_{\pm0.008}$ & 0.648$_{\pm0.005}$ & 0.615$_{\pm0.005}$ & 0.373$_{\pm0.004}$  & 0.658$_{\pm0.002}$ & 0.675$_{\pm0.003}$ & 0.734$_{\pm0.002}$ \\

DR-V2~\cite{li2023propensity} & 0.631$_{\pm0.005}$ & 0.553$_{\pm0.009}$ & 0.378$_{\pm0.006}$ & 0.636$_{\pm0.003}$ & 0.620$_{\pm0.002}$ & \underline{\textbf{0.392}$_{\pm0.004}$} & 0.658$_{\pm0.001}$ & 0.671$_{\pm0.001}$ & 0.734$_{\pm0.002}$ \\

\midrule

OME-EIB (ours) & 0.627$_{\pm0.002}$ & \textbf{0.563$^*_{\pm0.002}$} & \textbf{0.387$^*_{\pm0.002}$}  &  \textbf{0.664$_{\pm0.003}$} & 0.622$_{\pm0.004}$ & 0.384$_{\pm0.005}$  & \textbf{0.667$_{\pm0.003}$} & \textbf{0.682$^*_{\pm0.002}$} & \textbf{0.742$_{\pm0.002}$} \\

OME-IPS (ours) & \textbf{0.636}$_{\pm0.002}$ & 0.548$_{\pm0.003}$ & 0.373$_{\pm0.004}$ &  \textbf{0.664}$_{\pm0.003}$ & 0.625$_{\pm0.002}$ & \textbf{0.385$_{\pm0.002}$} & \textbf{0.671$^*_{\pm0.002}$} & \textbf{0.693$^*_{\pm0.004}$} & \textbf{0.763$^*_{\pm0.002}$} \\

OME-DR (ours) & \textbf{0.651$^*_{\pm0.006}$} & \textbf{0.561$_{\pm0.006}$} & \textbf{0.385$^*_{\pm0.005}$} & \textbf{0.671$^*_{\pm0.003}$} & \textbf{0.632$^*_{\pm0.003}$} & \textbf{0.389$_{\pm0.002}$} & 0.662$_{\pm0.002}$ & 0.677$_{\pm0.004}$ & \textbf{0.743$_{\pm0.002}$} \\

\bottomrule
\end{tabular}
}%
\label{tab:real-world}%
\begin{tablenotes}
\scriptsize
\item Note: * means statistically significant results ($\text{p-value} \leq 0.05$) using the paired-t-test compared with the best baseline.
\end{tablenotes}
\vspace{-8pt}
\end{table*}
\vspace{-2pt}
\section{Semi-Synthetic Experiments}
To investigate if the proposed estimators are able to estimate the true prediction inaccuracy accurately in the presence of OME and MNAR effect, several semi-synthetic experiments are conducted on a widely-used dataset MovieLens-100K (\textsc{ML-100K}). We first adopt MF to complete the five-scaled rating matrix $\mathbf{R}$. However, the completed matrix will have an unrealistic rating distribution, thus we sort the matrix entries in ascending order and assign a positive feedback probability of 0.1 for the lowest $p_1$ proportion, a positive feedback probability of 0.3 for the next $p_2$ proportion, and so on to obtain a true positive feedback probability $\gamma_{u, i}$ for each user-item pair. The adjusted probability matrix contains $\gamma_{u, i} \in\{0.1, 0.3, 0.5, 0.7, 0.9\}$ with proportion $\left[p_{1}, p_{2}, p_{3}, p_{4}, p_{5}\right]$, respectively.

Second, following the previous studies~\cite{Wang-Zhang-Sun-Qi2019, Schnabel-Swaminathan2016, MRDR}, we use six matrices below as the prediction matrix $\hat{\mathbf{R}}$: \\
$\bullet$ \textbf{ROTATE}:  Set predicted $\hat{r}_{u, i}=\gamma_{u, i}-0.2$ when $\gamma_{u, i} \geq 0.3$, and $\hat{r}_{u, i}=0.9$ when $\gamma_{u, i}=0.1$.\\
$\bullet$ \textbf{SKEW}: Predicted $\hat{r}_{u, i}$ are sampled from the Gaussian distribution $\mathcal{N}(\mu=\gamma_{u, i}, \sigma= (1-\gamma_{u, i})/{2})$, and clipped to the interval $[0.1,0.9]$.\\
$\bullet$ \textbf{CRS}: If the true $\gamma_{u, i} \leq 0.6$, then $\hat{r}_{u, i}=0.2$. Otherwise, $\hat{r}_{u, i}=0.6$.\\
$\bullet$ \textbf{ONE}: The predicted matrix $\mathbf{\hat R}$ is identical to the true positive feedback probability matrix, except that randomly select $\gamma_{u,i}=0.1$ with total amount $|\{(u, i) \mid \gamma_{u, i}=0.9\}|$ are flipped to 0.9. \\
$\bullet$ \textbf{THREE}: Same as \textbf{ONE}, but flipping $\gamma_{u,i} = 0.3$ instead.\\
$\bullet$ \textbf{FIVE}: Same as \textbf{ONE}, but flipping $\gamma_{u,i} = 0.5$ instead.

Next, we assign the propensity $p_{u, i}=p \alpha^{\min(4,  6- r_{u, i})}$ for each user-item pair and obtain the estimate propensities $$\frac{1}{\hat{p}_{u, i}}=\frac{1-{\beta}}{p_{u, i}}+\frac{\beta}{p_{e}},$$ where $p_{e}= |\mathcal{D}|^{-1} \sum_{(u, i) \in \mathcal{D}} o_{u, i}$, $p$ is set to 1 in our experiment and {$\beta$} is randomly sampled from a uniform distribution ${U}(0,\,1)$ to introduce noises. Then, we sample the binary true feedback matrix $\mathbf{R}^*$ and binary observation matrix $\mathbf{O}$ as follows:
$$o_{u, i} \sim \operatorname{Bern}(p_{u, i}),  \forall(u, i) \in \mathcal{D}, \quad r^*_{u, i} \sim \operatorname{Bern}(\gamma_{u, i}),  \forall(u, i) \in \mathcal{D},$$
where $\operatorname{Bern}(\cdot)$ denotes
the Bernoulli distribution. Then we flip the feedback matrix according to the $\rho_{01}$ and $\rho_{10}$ to generate a binary noise feedback matrix $\mathbf{R}$. The absolute relative error (RE) is used for evaluation, which is defined as $\operatorname{RE} (\mathcal{E}_{est})= |\mathcal{P}^*-\mathcal{E}_{est}(\hat{\mathbf{R}}, \mathbf{R})| / \mathcal{P}^*$,  
where $\cE_{est}$ denotes the estimate prediction inaccuracy. The smaller the RE, the more accurate the estimation. In addition, we provide both results for using estimated $\hat{\rho}_{01}$ and $\hat{\rho}_{10}$ to estimate $\mathcal{P}^*$ and directly using true $\rho_{01}$ and $\rho_{10}$ to estimate $\mathcal{P}^*$.

\vspace{4pt}\noindent{{\bf Performance Comparison.}}
We compare the proposed methods with the Naive method~\cite{koren2009matrix}, the EIB method~\cite{Steck2010}, the IPS method~\cite{Schnabel-Swaminathan2016}, and the DR-based methods~\cite{Wang-Zhang-Sun-Qi2019,MRDR,TDR,SDR}. The RE results are shown in Table \ref{tab:semi1} with $\rho_{01} = 0.2$ and $\rho_{10} = 0.1$. First, most debiasing and denoising methods have lower RE than the Naive method. In addition, our proposed methods stably and significantly outperform the corresponding debiasing baseline methods. Meanwhile, the OME methods are only able to denoising, thus they still have a large RE, even the true $\rho$ is used for estimation. This shows the effectiveness of our method in the presence of both the OME and MNAR effects.

\vspace{2.7pt}\noindent{{\bf In Depth Analysis.}} We explore the effect of the proportion of observed data on the estimation accuracy of $\hat{\rho}_{01}$ and $\hat{\rho}_{10}$, and the results when $\rho_{01} = 0.2$ and $\rho_{10} = 0.1$ are shown in Figure \ref{fig:semi}. We take ROTATE as an example, and the same phenomenon is observed for the other five prediction matrices. First, our methods stably outperform the baseline methods in all scenarios. Second, the estimation error decreases significantly as the proportion of observed data increases. Moreover, even with the estimation error, our estimation results are only slightly worse than using the real $\rho$. This further validates the stability and practicality of our methods.
\section{Real-World Experiments}
\textbf{Datasets and Experiment Details.} We verify the effectiveness of our methods on three real-world datasets: \textsc{Coat}~\cite{Schnabel-Swaminathan2016}, \textsc{Music}~\cite{Schnabel-Swaminathan2016} and \textsc{KuaiRec}~\cite{gao2022kuairec}. \textsc{Coat} includes 6,960 
MNAR ratings and 4,640 missing-at-random (MAR) ratings from 290 users to 300 items. \textsc{Music} has 311,704 MNAR ratings and 54,000 MAR ratings of 15,400 users to 1,000 items. For \textsc{Coat} and \textsc{Music}, we binarize the ratings less than three to 0 and otherwise to 1. \textsc{KuaiRec} is a fully exposed large-scale industrial dataset, which has 4,676,570 video watching ratio records from 1,411 users to 3,327 items. We binarize the
records less than one to 0 and otherwise to 1. We adopt three common metrics, \emph{i.e.}, AUC, NDCG@K, and Recall@K for performance evaluation. For \textsc{Coat} and \textsc{Music}, K is set to 5. For \textsc{KuaiRec}, K is set to 50. All the experiments are implemented on PyTorch with Adam as the optimizer. For all experiments, we use GeForce RTX 3090 as the computing resource. Logistic regression is used as the propensity model for all the methods requiring propensity. The learning rate is tuned in $\{0.001, 0.005, 0.01, 0.05\}$ and the batch size is tuned in $\{64, 128, 256\}$ for \textsc{Coat} and $\{2048, 4096, 8192\}$ for \textsc{Music} and \textsc{KuaiRec}. We tune the embedding dimension in $\{4, 8, 16, 32\}$ for \textsc{Coat} and $\{8, 16, 32, 64\}$ for \textsc{Music} and \textsc{KuaiRec}. Moreover, we tune the weight decay rate in $[1e-6, 5e-3]$.

\vspace{4pt}\noindent{{\bf Baselines.}} We use matrix factorization (MF)~\cite{koren2009matrix} as the base model and compare our methods to the following debiasing methods: EIB~\cite{Steck2010}, IPS~\cite{Schnabel-Swaminathan2016}, SNIPS~\cite{Schnabel-Swaminathan2016}, CVIB~\cite{CVIB}, DAMF~\cite{saito2019towards}, DR~\cite{saito2020doubly}, DR-JL~\cite{Wang-Zhang-Sun-Qi2019}, MRDR-JL~\cite{MRDR}, DR-BIAS~\cite{Dai-etal2022}, DR-MSE~\cite{Dai-etal2022}, DIB~\cite{liu2021mitigating}, MR~\cite{li-etal-MR2023}, SDR~\cite{SDR}, TDR~\cite{TDR}, IPS-V2~\cite{li2023propensity} and DR-V2~\cite{li2023propensity}. We also compare denoising methods including surrogate loss minimization (OME)~\cite{liu2015classification}, T-MF~\cite{wang2021denoising}, R-MF~\cite{wang2021denoising}, and LCD-MF~\cite{dai2022lcd}.


\vspace{4pt}\noindent{{\bf Performance Comparision.}} Table \ref{tab:real-world} shows the prediction performance with varying baselines and our methods. First, most of the debiasing and denoising methods have better performance compared to the Naive method, which shows the necessity of debiasing and denoising. Meanwhile, our methods exhibit the most competitive performance in all three datasets, significantly outperforming the baselines including debiasing methods and denoising methods.

\vspace{-4pt}
\section{Related Work}
\subsection{Debiased Recommendation}
The data collected in recommender systems are often systematically subject to varying types of bias, such as conformity bias~\cite{DBLP:conf/recsys/LiuCY16}, item popularity bias~\cite{wei2021model,zhang2021causal}, latent confounders~\cite{Ding-etal2022,li2023balancing}, and position bias~\cite{oosterhuis2022reaching,oosterhuis2023doubly}. In order to achieve unbiased learning of the prediction model, three typical approaches have been proposed, including: (1) The error-imputation-based (EIB) approaches~\cite{Steck2010,Lobato-etal2014,saito2020asymmetric}, which compute an imputed error for each missing rating. (2) The inverse-propensity-scoring (IPS) approaches~\cite{Schnabel-Swaminathan2016,saito2020ips,swaminathan2015self,saito2019unbiased}, which inversely weight the prediction error for each observed rating with the probability of observing that rating. (3) The doubly robust (DR) approaches~\cite{Wang-Zhang-Sun-Qi2019,saito2020doubly,kweon2024doubly}, which use both the error imputation model and the propensity model, and the unbiased learning of the prediction model can be achieved when either the error imputation model or the propensity model is accurate. Based on the above EIB, IPS, and DR estimators, in terms of learning paradigms, recent studies have investigated flexible trade-offs between bias and variance~\cite{MRDR,Dai-etal2022}, parameter sharing in multi-task learning~\cite{ESMM,Multi_IPW,wang2022escm2}, and the use of a few unbiased ratings to improve the estimation of the prediction inaccuracy~\cite{bonner2018causal,Chen-etal2021,liu2020general,Wang-etal2021,liu2022kdcrec,lin2023transfer}. In terms of statistical theory, recent studies have developed targeted DR (TDR)~\cite{TDR} and conservative DR (CDR)~\cite{song2023cdr} to combat the inaccurate imputed errors, StableDR~\cite{SDR} to combat sparse data, and new propensity estimation methods based on balancing metrics~\cite{luo2021unbiased,li2023propensity,yang2023debiased}. In addition, \cite{saito2019towards} proposes to minimize the propensity-independent generalization error bound via adversarial learning. \cite{li-etal-MR2023} extends DR to multiple robust learning. These methods have also been applied to sequential recommendation~\cite{wang2022unbiased} and social recommendation~\cite{chen2018social}. Furthermore, methods based on information bottlenecks~\cite{liu2021mitigating,CVIB} and representation learning~\cite{yang2023debiased,pandiscriminative} have also been proposed for debiased recommendation. In this work, we extend the previous debiasing methods to a more realistic RS scenario, in which the observed ratings and the users' true preferences may be different.


\vspace{-2pt}
\subsection{Outcome Measurement Error}
Outcome measurement error (OME) refers to the inconsistency between the observed outcomes and the true outcomes, also known
as noisy outcomes in social science~\cite{roberts1985measurement}, biostatistics~\cite{hui1980estimating}, and 
psychometrics~\cite{Patrick2012}. The error models or transition matrices in OME establish the relation between the true outcomes and the observed outcomes, including uniform \cite{angluin1988learning,van2015learning},
class-conditional \cite{menon2015learning,scott2013classification,liu2015classification}, and instance-dependent \cite{chen2021beyond,xia2020part} structures
of outcome misclassification. Many data-driven error parameter estimation methods are developed in recent machine learning literature~\cite{northcutt2021confident,scott2013classification,scott2015rate,xia2019anchor}. Meanwhile, given knowledge
of measurement error parameters, recent studies propose unbiased risk minimization approaches for learning under noisy labels~\cite{natarajan2013learning,chou2020unbiased,patrini2017making,van2015machine}.

Despite the prevalence of OME in real-world recommendation scenarios, there is still only limited work focusing on denoising in RS~\cite{zhang2023robust,chen2022denoising,zhang2023denoising}. Recently, by noticing that noisy feedback typically has large loss values in the early stages, \cite{wang2021denoising} proposes adaptive denoising training and \cite{gao2022self} proposes self-guided denoising learning in implicit feedback. However, the existing methods are mostly heuristic and require the fully observed noisy labels, which prevents the direct use of such methods for RS in the presence of missing data. To fill this gap, we extend the surrogate loss-based methods to address the data MNAR and OME in parallel.

\vspace{-2pt}
\section{Conclusion}
In this study, we explored the challenges characterized MNAR data with noisy feedback encountered in real-world recommendation scenarios. First, we introduced the concept of true prediction inaccuracy as a more comprehensive evaluation criterion that accounts for OME, and proposed OME-EIB, OME-IPS, and OME-DR estimators to provide unbiased estimations of the true prediction inaccuracy. Next, we derived the explicit forms of the biases and the generalization bounds of the proposed estimators, and proved the double robustness of our OME-DR estimator. We further proposed an alternating denoise training approach that corrects for MNAR data with OME. The effectiveness of our methods was validated on both semi-synthetic and real-world datasets with varying MNAR and OME rates. The potential limitations includes the usage of weak separability assumption, and accurate measurement error parameters' estimation for ensuring unbiasedness. In future research, we aim to develop unbiased learning methods applicable to more complex OME forms, 
study alternative practical assumptions for identifying and estimating the error parameters, and extend the applicability of the proposed methods to a wider range of RS settings.
\vspace{-2pt}
\section*{Acknowledgements}
This work was supported in part by National Natural Science Foundation of China (623B2002, 62272437).

\newpage
\bibliographystyle{ACM-Reference-Format}
\bibliography{reference}

\newpage
\appendix
\section{Proofs}\label{app-a}
\textsc{Theorem 3.1 (Bias of OME-DR Estimator).} \emph{Given $\hat \rho_{01}$ and  $\hat \rho_{10}$ with $\hat \rho_{01}+\hat \rho_{10}<1$, imputed errors $\bar{\mathbf{E}}$ and learned propensities $\hat{\mathbf{P}}$ with $\hat{p}_{u, i}>0$ for all user-item pairs, the bias of the OME-DR estimator is
\begin{align*}
&\operatorname{Bias}[\mathcal{E}_{\mathrm{OME-DR}}(\hat{\mathbf{R}}, \mathbf{R}^o; \hat \rho_{01}, \hat \rho_{10})]=
\frac{1}{|\cD|}\left|\sum_{(u,i) \in \cD}\left(1-\frac{o_{u, i}}{\hat p_{u, i}}\right)\bar e_{u, i}\right.+\\
&\sum_{(u,i): r^*_{u,i}=1}\left(\frac{p_{u, i}\omega_{11}-\hat p_{u, i}}{\hat p_{u, i}}\ell(f_{\theta}(x_{u,i}),1)+\frac{p_{u, i}\omega_{01}}{\hat p_{u, i}}\ell(f_{\theta}(x_{u,i}),0)\right)+\\
&\left.\sum_{(u,i): r^*_{u,i}=0}\left(\frac{p_{u, i}\omega_{10}}{\hat p_{u, i}}\ell(f_{\theta}(x_{u,i}),1)+\frac{p_{u, i}\omega_{00}-\hat p_{u, i}}{\hat p_{u, i}}\ell(f_{\theta}(x_{u,i}),0)\right)\right|,
\end{align*}
where $\omega_{11}$, $\omega_{01}$, $\omega_{10}$, and $\omega_{00}$ are given by
\begin{align*}
\omega_{11}&=\frac{1-\rho_{01}-\hat \rho_{10}}{1-\hat \rho_{01}-\hat \rho_{10}}, \quad \omega_{01}= \frac{\rho_{01}-\hat \rho_{01}}{1-\hat \rho_{01}-\hat \rho_{10}},\\
\omega_{10}&=\frac{\rho_{10}-\hat \rho_{10}}{1-\hat \rho_{01}-\hat \rho_{10}}, \quad \omega_{00}=\frac{1-\hat \rho_{01}- \rho_{10}}{1-\hat \rho_{01}-\hat \rho_{10}}.
\end{align*}}
\begin{proof}
By definition, the bias of the OME-DR estimator is
\begin{align*}
\operatorname{Bias}[\mathcal{E}_{\mathrm{OME-DR}}(\hat{\mathbf{R}}, \mathbf{R}^o; \hat \rho_{01}, \hat \rho_{10})]=\left|\mathbb{E}_{\mathbf{R},\mathbf{O}}\left[\mathcal{E}_{\mathrm{OME-DR}}\right]-\mathcal{P}^*\right|.
\end{align*}
By double expectation formula, the first term is
\begin{align*}
&\mathbb{E}_{\mathbf{R},\mathbf{O}}\left[\mathcal{E}_{\mathrm{OME-DR}}(\hat{\mathbf{R}}, \mathbf{R}^o; \hat \rho_{01}, \hat \rho_{10})\right]\\
={}&{}\mathbb{E}_{\mathbf{R}\mid \mathbf{O}}\left[\mathbb{E}_{\mathbf{O}}\left[\mathcal{E}_{\mathrm{OME-DR}}(\hat{\mathbf{R}}, \mathbf{R}^o; \hat \rho_{01}, \hat \rho_{10})\right]\right]\\
={}&{}\mathbb{E}_{\mathbf{R}\mid \mathbf{O}}\left[\frac{1}{|\cD|} \sum_{(u,i) \in \cD} \left(\left(1-\frac{p_{u, i}}{\hat p_{u, i}}\right)\bar e_{u, i}+\frac{p_{u, i}\tilde e_{u, i}}{\hat p_{u, i}}\right)\right]\\
={}&{}\frac{1}{|\cD|} \sum_{(u,i): r^*_{u,i}=1} \mathbb{E}_{\mathbf{R}\mid \mathbf{O}}\left[\left(1-\frac{p_{u, i}}{\hat p_{u, i}}\right)\bar e_{u, i}+\frac{p_{u, i}\tilde e_{u, i}}{\hat p_{u, i}}\right]\\
+{}&{}\frac{1}{|\cD|}\sum_{(u,i): r^*_{u,i}=0}\mathbb{E}_{\mathbf{R}\mid \mathbf{O}}\left[\left(1-\frac{p_{u, i}}{\hat p_{u, i}}\right)\bar e_{u, i}+\frac{p_{u, i}\tilde e_{u, i}}{\hat p_{u, i}}\right].
\end{align*}
Then, we can derive the bias of the OME-DR estimator as follows
\begin{align*}
&\operatorname{Bias}[\mathcal{E}_{\mathrm{OME-DR}}(\hat{\mathbf{R}}, \mathbf{R}^o; \hat \rho_{01}, \hat \rho_{10})]=\left|\mathbb{E}_{\mathbf{R}\mid \mathbf{O}}\left[\mathbb{E}_{\mathbf{O}}\left[\mathcal{E}_{\mathrm{OME-DR}}\right]\right]-\mathcal{P}^*\right|\\
={}&{}\left|\frac{1}{|\cD|} \sum_{(u,i): r^*_{u,i}=1} \mathbb{E}_{\mathbf{R}\mid \mathbf{O}}\left[\left(1-\frac{p_{u, i}}{\hat p_{u, i}}\right)\bar e_{u, i}+\frac{p_{u, i}\tilde e_{u, i}}{\hat p_{u, i}}\right]-\ell(f_{\theta}(x_{u,i}),1)\right.\\
+{}&{}\left.\frac{1}{|\cD|}\sum_{(u,i): r^*_{u,i}=0}\mathbb{E}_{\mathbf{R}\mid \mathbf{O}}\left[\left(1-\frac{p_{u, i}}{\hat p_{u, i}}\right)\bar e_{u, i}+\frac{p_{u, i}\tilde e_{u, i}}{\hat p_{u, i}}\right]-\ell(f_{\theta}(x_{u,i}),0)\right|\\
={}&{}\frac{1}{|\cD|}\left|\sum_{(u,i): r^*_{u,i}=1}\left(1-\frac{p_{u, i}}{\hat p_{u, i}}\right)\bar e_{u, i}+\sum_{(u,i): r^*_{u,i}=0}\left(1-\frac{p_{u, i}}{\hat p_{u, i}}\right)\bar e_{u, i}\right.\\
+{}&{}\sum_{(u,i): r^*_{u,i}=1}\left(\frac{p_{u, i}}{\hat p_{u, i}}\cdot\mathbb{E}_{\mathbf{R}\mid \mathbf{O}}\left[\tilde e_{u,i}\right]-\ell(f_{\theta}(x_{u,i}),1)\right)\\
+{}&{}\left.\sum_{(u,i): r^*_{u,i}=0}\left(\frac{p_{u, i}}{\hat p_{u, i}}\cdot\mathbb{E}_{\mathbf{R}\mid \mathbf{O}}\left[\tilde e_{u,i}\right]-\ell(f_{\theta}(x_{u,i}),0)\right)\right|.
\end{align*}
On one hand, for the user-item pairs with $r^*_{u,i}=1$, we have
\begin{align*}
&\mathbb{E}_{\mathbf{R}\mid \mathbf{O}}\left[\tilde e_{u,i}\right]\\
={}&{}\left(1-\rho_{01}\right) \cdot\tilde{\ell}(f_{\theta}(x_{u,i}),1; \hat \rho_{01}, \hat \rho_{10})+\rho_{01}\cdot \tilde{\ell}(f_{\theta}(x_{u,i}),0; \hat \rho_{01}, \hat \rho_{10})\\
={}&{}\left(1-\rho_{01}\right) \cdot \frac{\left(1-\hat \rho_{10}\right) \ell(f_{\theta}(x_{u,i}),1)-\hat \rho_{01} \ell(f_{\theta}(x_{u,i}),0)}{1-\hat \rho_{01}-\hat \rho_{10}}\\
&{}+\rho_{01}\cdot\frac{\left(1-\hat \rho_{01}\right) \ell(f_{\theta}(x_{u,i}),0)-\hat \rho_{10} \ell(f_{\theta}(x_{u,i}),1)}{1-\hat \rho_{01}-\hat \rho_{10}}\\
={}&{}\frac{1-\rho_{01}-\hat \rho_{10}}{1-\hat \rho_{01}-\hat \rho_{10}}\ell(f_{\theta}(x_{u,i}),1)+\frac{\rho_{01}-\hat \rho_{01}}{1-\hat \rho_{01}-\hat \rho_{10}}\ell(f_{\theta}(x_{u,i}),0)\\
:={}&{}\omega_{11}\ell(f_{\theta}(x_{u,i}),1)+\omega_{01}\ell(f_{\theta}(x_{u,i}),0),
\end{align*}
where $\omega_{11}$ and $\omega_{01}$ are given by
\begin{align*}
\omega_{11}=\frac{1-\rho_{01}-\hat \rho_{10}}{1-\hat \rho_{01}-\hat \rho_{10}}, \quad \omega_{01}= \frac{\rho_{01}-\hat \rho_{01}}{1-\hat \rho_{01}-\hat \rho_{10}}.
\end{align*}
On the other hand, for the user-item pairs with $r^*_{u,i}=0$, we have
\begin{align*}
&\mathbb{E}_{\mathbf{R}\mid \mathbf{O}}\left[\tilde e_{u,i}\right]\\
={}&{} \rho_{10} \cdot\tilde{\ell}(f_{\theta}(x_{u,i}),1; \hat \rho_{01}, \hat \rho_{10})+(1-\rho_{10})\cdot \tilde{\ell}(f_{\theta}(x_{u,i}),0; \hat \rho_{01}, \hat \rho_{10})\\
={}&{}\rho_{10} \cdot \frac{\left(1-\hat \rho_{10}\right) \ell(f_{\theta}(x_{u,i}),1)-\hat \rho_{01} \ell(f_{\theta}(x_{u,i}),0)}{1-\hat \rho_{01}-\hat \rho_{10}}\\
&{}+(1-\rho_{10})\cdot\frac{\left(1-\hat \rho_{01}\right) \ell(f_{\theta}(x_{u,i}),0)-\hat \rho_{10} \ell(f_{\theta}(x_{u,i}),1)}{1-\hat \rho_{01}-\hat \rho_{10}}\\
={}&{}\frac{\rho_{10}-\hat \rho_{10}}{1-\hat \rho_{01}-\hat \rho_{10}}\ell(f_{\theta}(x_{u,i}),1)+\frac{1-\hat \rho_{01}- \rho_{10}}{1-\hat \rho_{01}-\hat \rho_{10}}\ell(f_{\theta}(x_{u,i}),0)\\
:={}&{}\omega_{10}\ell(f_{\theta}(x_{u,i}),1)+\omega_{00}\ell(f_{\theta}(x_{u,i}),0),
\end{align*}
where $\omega_{10}$ and $\omega_{00}$ are given by
\begin{align*}
\omega_{10}=\frac{\rho_{10}-\hat \rho_{10}}{1-\hat \rho_{01}-\hat \rho_{10}}, \quad \omega_{00}=\frac{1-\hat \rho_{01}- \rho_{10}}{1-\hat \rho_{01}-\hat \rho_{10}}.
\end{align*}
This completes the proof.
\end{proof}



\textsc{Lemma 3.3 (Lipschitz Continuity).} \emph{Given $\hat \rho_{01}$ and  $\hat \rho_{10}$ with $\hat \rho_{01}+\hat \rho_{10}<1$, if $\ell(f_{\theta}(x_{u,i}), r^*_{u, i})$ is $L$-Lipschitz in $f_{\theta}(x_{u,i})$ for all $r^*_{u, i}$, then $\tilde{\ell}(f_{\theta}(x_{u,i}),r_{u, i})$ is $\frac{2L}{1-\hat \rho_{01}-\hat \rho_{10}}$-Lipschitz in $f_{\theta}(x_{u,i})$ for all $r_{u, i}$.}
\begin{proof}
The explicit form of $\tilde{\ell}(f_{\theta}(x_{u,i}),1)$ and $\tilde{\ell}(f_{\theta}(x_{u,i}),0)$ are
$$
\begin{aligned}
\tilde{\ell}(f_{\theta}(x_{u,i}),1) & =\frac{\left(1-\hat \rho_{10}\right) \ell(f_{\theta}(x_{u,i}),1)-\hat \rho_{01} \ell(f_{\theta}(x_{u,i}),0)}{1-\hat \rho_{01}-\hat \rho_{10}}, \\
\tilde{\ell}(f_{\theta}(x_{u,i}),0) & =\frac{\left(1-\hat \rho_{01}\right) \ell(f_{\theta}(x_{u,i}),0)-\hat \rho_{10} \ell(f_{\theta}(x_{u,i}),1)}{1-\hat \rho_{01}-\hat \rho_{10}}.
\end{aligned}
$$
Then the conclusion can be dirived directly from $\hat \rho_{01}+\hat \rho_{10}<1$.
\end{proof}

\begin{lemma}[McDiarmid's Inequality]
Let $V$ be some set and let $f: V^m \rightarrow \mathbb{R}$ be a function of $m$ variables such that for some $c>0$, for all $i \in[m]$ and for all $x_1, \ldots, x_m, x_i^{\prime} \in V$ we have
$$
\left|f\left(x_1, \ldots, x_m\right)-f\left(x_1, \ldots, x_{i-1}, x_i^{\prime}, x_{i+1}, \ldots, x_m\right)\right| \leq c
$$
Let $X_1, \ldots, X_m$ be $m$ independent random variables taking values in $V$. Then, with probability of at least $1-\eta$ we have
$$
\left|f\left(X_1, \ldots, X_m\right)-\mathbb{E}\left[f\left(X_1, \ldots, X_m\right)\right]\right| \leq c \sqrt{\log \left(\frac{2}{\eta}\right) m / 2}.
$$    
\end{lemma}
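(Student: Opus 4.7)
The plan is to follow the standard martingale-based proof via a Doob decomposition combined with Hoeffding's lemma (a Chernoff-Azuma argument). First I would define the Doob martingale $Z_i = \mathbb{E}[f(X_1,\ldots,X_m) \mid X_1,\ldots,X_i]$ for $i=0,1,\ldots,m$, so that $Z_0 = \mathbb{E}[f(X_1,\ldots,X_m)]$ and $Z_m = f(X_1,\ldots,X_m)$. The associated martingale difference sequence is $D_i = Z_i - Z_{i-1}$, and telescoping gives $f - \mathbb{E}[f] = \sum_{i=1}^m D_i$.

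Next I would establish that each $D_i$ lies in a (random but conditionally deterministic) interval of length at most $c$. Concretely, conditional on $X_1,\ldots,X_{i-1}$, define
\begin{align*}
A_i &= \inf_{x \in V} \mathbb{E}[f(X_1,\ldots,X_m) \mid X_1,\ldots,X_{i-1}, X_i = x] - Z_{i-1}, \\
B_i &= \sup_{x \in V} \mathbb{E}[f(X_1,\ldots,X_m) \mid X_1,\ldots,X_{i-1}, X_i = x] - Z_{i-1}.
\end{align*}
The bounded-differences hypothesis, together with independence of the $X_j$'s, implies $B_i - A_i \le c$, since swapping $X_i$ for $x_i'$ changes $f$ by at most $c$ for every realization of $(X_{i+1},\ldots,X_m)$, and conditional expectations inherit this bound. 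The independence assumption is crucial here, as it lets me freeze $X_1,\ldots,X_{i-1}$ and average only over $X_{i+1},\ldots,X_m$ without a coupling between $X_i$ and the rest.

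Then I would invoke Hoeffding's lemma conditionally: for any random variable $W$ with $W \in [a,b]$ and $\mathbb{E}[W]=0$, one has $\mathbb{E}[e^{\lambda W}] \le \exp(\lambda^2 (b-a)^2/8)$. Applying this to $D_i$ conditioned on $\mathcal{F}_{i-1} = \sigma(X_1,\ldots,X_{i-1})$ (noting $\mathbb{E}[D_i \mid \mathcal{F}_{i-1}] = 0$ by the martingale property) yields $\mathbb{E}[e^{\lambda D_i} \mid \mathcal{F}_{i-1}] \le \exp(\lambda^2 c^2/8)$. Iterating via the tower property gives $\mathbb{E}[e^{\lambda(f - \mathbb{E}[f])}] \le \exp(\lambda^2 m c^2/8)$. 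A Chernoff bound $\mathbb{P}(f - \mathbb{E}[f] \ge t) \le e^{-\lambda t}\,\mathbb{E}[e^{\lambda(f - \mathbb{E}[f])}]$ optimized over $\lambda > 0$ (choose $\lambda = 4t/(mc^2)$) yields $\mathbb{P}(f - \mathbb{E}[f] \ge t) \le \exp(-2t^2/(mc^2))$.

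Finally, applying the same argument to $-f$ and taking a union bound produces the two-sided inequality $\mathbb{P}(|f - \mathbb{E}[f]| \ge t) \le 2\exp(-2t^2/(mc^2))$. Setting the right-hand side equal to $\eta$ and solving gives $t = c\sqrt{m \log(2/\eta)/2}$, which is exactly the claimed bound. The main obstacle I expect is verifying the length-$c$ bound on the conditional interval $[A_i,B_i]$ cleanly; this is where the bounded-differences hypothesis and independence must be combined carefully, and it is subtly different from a naive claim that $|D_i| \le c$ almost surely.
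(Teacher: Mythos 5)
Your proposal is correct, but it is worth noting that the paper does not actually prove this lemma at all: its entire ``proof'' is a one-line citation to Lemma 26.4 of Shalev-Shwartz and Ben-David (2014), which is where the specific form of the bound (with the constant $c\sqrt{m\log(2/\eta)/2}$) comes from. You have instead supplied the standard self-contained Doob-martingale argument that underlies that textbook result, and your details check out: the Chernoff optimization $\lambda = 4t/(mc^2)$ applied to $\mathbb{E}[e^{\lambda(f-\mathbb{E}[f])}] \le \exp(\lambda^2 mc^2/8)$ gives $\mathbb{P}(f-\mathbb{E}[f]\ge t)\le \exp(-2t^2/(mc^2))$, the union bound over both tails contributes the factor $2$, and setting $2\exp(-2t^2/(mc^2))=\eta$ yields exactly $t=c\sqrt{m\log(2/\eta)/2}$ as stated. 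Your closing remark is also the right place to focus care: the sharp constant requires showing that $D_i$ lies in an $\mathcal{F}_{i-1}$-measurable interval $[A_i,B_i]$ with $B_i-A_i\le c$ (so Hoeffding's lemma contributes $(b-a)^2/8 = c^2/8$), which uses independence to integrate the pointwise bounded-differences condition over $X_{i+1},\ldots,X_m$; the naive claim $|D_i|\le c$ fed into Azuma's inequality would only give $\exp(-t^2/(2mc^2))$, a strictly weaker bound that would not reproduce the stated constant. In short, your route buys a complete proof where the paper offers only a pointer, at the cost of the conditional-range bookkeeping you correctly identified as the one delicate step.
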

\begin{proof}
The proof can be found in Lemma 26.4 of~\citep{shalev2014understanding}.
\end{proof}

\textsc{Theorem 3.4 (Generalization Bound).} \emph{Given $\hat \rho_{01}$ and  $\hat \rho_{10}$ with $\hat \rho_{01}+\hat \rho_{10}<1$, suppose $\ell(f_{\theta}(x_{u,i}), r^*_{u, i})$ is $L$-Lipschitz in $f_{\theta}(x_{u,i})$ for all $r^*_{u, i}$, and $\hat p_{u, i}\geq C_p$, $|\tilde{\ell}(f_{\theta}(x_{u,i}),r_{u, i})|\leq C_l$ for all $r_{u, i}$, then with probability $1-\eta$, the true prediction inaccuracy $\mathcal{P}(\hat{\mathbf{R}}^\ddagger, \mathbf{R}^*)$ of the optimal prediction matrix using the OME-DR estimator with imputed errors $\bar{\mathbf{E}}$ and learned propensities $\hat{\mathbf{P}}$ has the upper bound
\begin{align*}
&\mathcal{E}_{\mathrm{OME-DR}}(\hat{\mathbf{R}}^\ddagger, \mathbf{R}^o; \hat \rho_{01}, \hat \rho_{10})+\operatorname{Bias}[\mathcal{E}_{\mathrm{OME-DR}}(\hat{\mathbf{R}}^\ddagger, \mathbf{R}^o; \hat \rho_{01}, \hat \rho_{10})]+\\
&\left(1+\frac{2}{C_p}\right)\left[\frac{4L}{1-\hat \rho_{01}-\hat \rho_{10}}\mathcal{R}(\mathcal{F})+\left(C_l+\frac{4L}{1-\hat \rho_{01}-\hat \rho_{10}}\right)\sqrt{\frac{2\log (4 / \eta)}{|\mathcal{D}|}}\right].
\end{align*}}

\begin{proof}
The true prediction inaccuracy can be decomposed as
\begin{align*}
&\mathcal{P}^*=\cE_\mathrm{OME-DR}+\left(\mathcal{P}^*-\bfE[\cE_\mathrm{OME-DR}]\right)+\left(\bfE[\cE_\mathrm{OME-DR}]-\cE_\mathrm{OME-DR}\right)\\
&\leq{}\cE_\mathrm{OME-DR}+\operatorname{Bias}(\cE_\mathrm{OME-DR})+\left(\bfE[\cE_\mathrm{OME-DR}]-\cE_\mathrm{OME-DR}\right)\\
&\leq{} \cE_\mathrm{OME-DR}+\operatorname{Bias}(\cE_\mathrm{OME-DR})+ \sup _{f_\theta \in \mathcal{F}}\left(\bfE[\cE_\mathrm{OME-DR}]-\cE_\mathrm{OME-DR}\right).
\end{align*}
To simplify notation, let $\mathcal{B(\mathcal{F})}=\sup _{f_\theta \in \mathcal{F}}\left(\bfE[\cE_\mathrm{OME-DR}]-\cE_\mathrm{OME-DR}\right)$, which can be decomposed as 
\begin{align*}
\mathcal{B(\mathcal{F})}=\underset{S \sim \mathcal{\P}^{|\mathcal{D}|}}{\mathbb{E}}[\mathcal{B(\mathcal{F})}]+\left\{\mathcal{B(\mathcal{F})}-\underset{S \sim \mathcal{\P}^{|\mathcal{D}|}}{\mathbb{E}}[\mathcal{B(\mathcal{F})}]\right\}.
\end{align*}

For the first term $\underset{S \sim \mathcal{\P}^{|\mathcal{D}|}}{\mathbb{E}}[\mathcal{B(\mathcal{F})}]$, according to the Rademacher Comparison Lemma~\citep{shalev2014understanding} and Talagrand's Lemma~\citep{mohri2018foundations}, we have
\begin{align*}
&\underset{S \sim \mathcal{\P}^{|\mathcal{D}|}}{\mathbb{E}}[\mathcal{B(\mathcal{F})}]\\
&\leq 2 \underset{S \sim \mathcal{\P}^{|\mathcal{D}|}}{\mathbb{E}} \mathbb{E}_{\mathbf{\sigma}} \sup _{f_\theta \in \mathcal{F}}\left[\frac{1}{|\cD|}\sum_{(u, i)\in \cD}\sigma_{u, i}\left(\left(1-\frac{o_{u, i}}{\hat p_{u, i}}\right)\bar e_{u, i}+\frac{o_{u, i}\tilde e_{u, i}}{\hat p_{u, i}}\right)\right]\\
&\leq 2 \left(1+\frac{2}{C_p}\right)\frac{2L}{1-\hat \rho_{01}-\hat \rho_{10}}\cdot \underset{S \sim \mathcal{\P}^{|\mathcal{D}|}}{\mathbb{E}} \mathbb{E}_{\mathbf{\sigma}} \sup _{f_\theta \in \mathcal{F}}\left[\frac{1}{|\cD|}\sum_{(u, i)\in \cD}\sigma_{u, i}f_\theta(x_{u, i})\right]\\
&= 2 \left(1+\frac{2}{C_p}\right)\frac{2L}{1-\hat \rho_{01}-\hat \rho_{10}}\cdot \underset{S \sim \mathcal{\P}^{|\mathcal{D}|}}{\mathbb{E}}\{\mathcal{R}(\mathcal{F})\},
\end{align*}
where $\mathcal{R}(\mathcal{F})$ is the empirical Rademacher complexity of $\mathcal{F}$
\[
\mathcal{R}(\mathcal{F})=\mathbb{E}_{\mathbf{\sigma} \sim\{-1,+1\}^{|\mathcal{D}|}} \sup _{f_\theta \in \mathcal{F}}\left[\frac{1}{|\mathcal{D}|} \sum_{(u, i) \in \mathcal{D}} \sigma_{u, i} f_{\theta}(x_{u, i})\right].
\]

By applying McDiarmid's inequality, and let $c=\frac{2}{|\cD|}$,  with  probability at least $1-\frac{\eta}{2}$, 
$$
\left|\mathcal{R}(\mathcal{F})-\underset{S \sim \mathcal{\P}^{|\mathcal{D}|}}{\mathbb{E}}\{\mathcal{R}(\mathcal{F})\}\right| \leq 2 \sqrt{\frac{ \log (4 / \eta)}{2|\mathcal{D}|}}= \sqrt{\frac{2\log (4 / \eta)}{|\mathcal{D}|}}.
$$

For the rest term $\mathcal{B(\mathcal{F})}-\underset{S \sim \mathcal{\P}^{|\mathcal{D}|}}{\mathbb{E}}[\mathcal{B(\mathcal{F})}]$, by applying McDiarmid's inequality, and let $c=\frac{2}{|\cD|}C_l\left(1+\frac{2}{C_p}\right)$,  with  probability at least $1-\frac{\eta}{2}$, 
$$
\left|\mathcal{B(\mathcal{F})}-\underset{S \sim \mathcal{\P}^{|\mathcal{D}|}}{\mathbb{E}}[\mathcal{B(\mathcal{F})}]\right| \leq C_l\left(1+\frac{2}{C_p}\right) \sqrt{\frac{2\log (4 / \eta)}{|\mathcal{D}|}}.
$$
After adding the inequalities above, we can
rearrange the terms to obtain the stated results.
\end{proof}

\end{document}